\begin{document}

\pagenumbering{arabic}

\title{On the Capacity of Rate-Adaptive Packetized Wireless Communication Links under Jamming \titlenote{Research partially supported by NSF Award CNS-0915985.}}
\subtitle{TO APPEAR IN Fifth ACM Conference on Security and Privacy in Wireless and Mobile Networks (WiSec'2012)}
\numberofauthors{3}
\author{
\alignauthor
Koorosh Firouzbakht\\
      \affaddr{Electrical and Computer}\\
      \affaddr{Engineering Department}\\
      \affaddr{Northeastern University}\\
      \affaddr{Boston, Massachusetts}\\
      \email{firouzbakht.k}\\
      \email{@husky.neu.edu}\\
\alignauthor
Guevara Noubir\\
      \affaddr{College of Computer and}\\
      \affaddr{Information Science}\\
      \affaddr{Northeastern University}\\
      \affaddr{Boston, Massachusetts}\\
      \email{noubir@ccs.neu.edu}\\
\alignauthor
Masoud Salehi\\
      \affaddr{Electrical and Computer}\\
      \affaddr{Engineering Department}\\
      \affaddr{Northeastern University}\\
      \affaddr{Boston, Massachusetts}\\
      \email{salehi@ece.neu.edu}\\
}
\maketitle

\markright{balalal}

\begin{abstract}
We formulate the interaction between the communicating nodes and an adversary within a
game-theoretic context. We show that earlier information-theoretic capacity results for a jammed
channel correspond to a pure Nash Equilibrium (NE). However, when both players are allowed to
randomize their actions (i.e., coding rate and jamming power) new mixed Nash equilibria appear with
surprising properties. We show the existence of a threshold ($J_{TH}$) such that if the jammer
average power exceeds $J_{TH}$, the channel capacity at the NE is the same as if the jammer was
using its maximum allowable power, $J_{Max}$, all the time. This indicates that randomization
significantly advantages powerful jammers. We also show how the NE strategies can be derived, and
we provide very simple (e.g., semi-uniform) approximations to the optimal communication and jamming
strategies. Such strategies are very simple to implement in current hardware and software.
\end{abstract}

%

\keywords{
Jamming, rate adaptation, capacity, game-theory.
}
\section{Introduction}	\label{Sec:Introduction}
Over the last decades, wireless communication proved to be an enabling technology to an
increasingly large number of applications. The convenience of wireless and its support of mobility
has revolutionized the way we access data, information services, and interact with the physical
world. Beyond enabling mobile devices to access information and data services ubiquitously,
wireless technology is  widely used in cyber-physical systems such as air-traffic control, power
plants synchronization, transportation systems, and human body implantable devices. This
pervasiveness elevated wireless communication systems to the level of critical infrastructure.
Radio-Frequency wireless communications occur over a broadcast medium, that is not only shared
between the communicating nodes but is also exposed to adversaries. Jamming is one of the most
prominent security threats as it not only can lead to denial of service attacks, but can also be
the prelude to spoofing attacks.

Anti - jamming has been an active area of research for decades. Various techniques for combating
jamming have been developed at the physical layer~\cite{SimonOSL01} which include directional
antennas, spread spectrum communication, power / modulation / coding control. At the time, most of the wireless communication were not packetized nor networked. Reliable communication in the presence of adversaries regained significant interest in the last few years, as new jamming attacks and the need for more complex applications and deployment environments have emerged. Several specifically crafted attacks and counter-attacks were proposed for packetized wireless data networks~\cite{negi+p:jam,LinN04,LiKP07,WilhelmMSL11}, 
multiple access resolution~\cite{BenderFHKL05,GilbertRC06,BayraktarogluKLNRT08, AwerbuchRS08}, multi-hop networks~\cite{XuMTZ06, TagueSNP08, LiKP07}, 
broadcast and control communication~\cite{KBKV06, ChiangH07, ChanLNT07, TagueLP07,LazosLK09,LiuPDL10,LiuLK11}, cross-layer resiliency~\cite{LinN05}, wireless sensor networks~\cite{XuKWY06,XuTZ07,XuTZ08}, spread-spectrum without shared secrets~\cite{StrasserCSM08, SlaterPRB09, StrasserCS09, JinNT09}, and navigation information broadcast systems~\cite{RasmussenCC07}. 

Nevertheless, very little work has been done on protecting rate adaptation algorithms against 
adversarial attacks. Rate adaptation plays an important role in widely used wireless 
communication systems such as IEEE802.11 standard as the link quality in a WLAN is often 
highly dynamic. In recent years, a number of algorithms for rate adaptation have been proposed in literature~\cite{HollandVB01,JuddXP08,VutukuruHK09,RahulFDC09,RamachandranKZG08,CampE08,KimSSD06,WongHSV06}, and some are widely deployed~\cite{Bicket05,LacageMT04}. 
Recently, rate adaptation for the widely used
IEEE 802.11 protocol was investigated in~\cite{PelechrinisBKG09, BroustisKDSL09, NoubirRST11}. Experimental and
theoretical analysis of optimal jamming strategies against currently deployed rate adaptation
algorithms indicate that IEEE 802.11 can be significantly degraded with very few interfering
pulses. The commoditization of software radios makes these attacks very practical and calls for
investigation of the capacity of packetized communication under adaptive jamming.

In this work, we focus on the problem of determining the optimal rate control and
adaptation mechanisms for a channel subject to a power constrained jammer. We consider a setup
where a pair of nodes (transmitter and receiver) communicate using data packets. An adversary
(jammer) can interfere with the communication but is constrained by an instantaneous maximum power
per packet ($J_{Max}$) and a long-run average power ($J_{Ave}$). Appropriately coded packets can
overcome interference and are lost otherwise. Over-coding (coding at low rates) reduces the
throughput, while under-coding (coding at high rates) increases the chances of loosing a packet. An
important question is to understand the interaction between the communicating nodes and the
adversary, determine the long-term achievable maximum throughput and the optimal strategy to
achieve it, as well as the optimal strategy for the adversary. While, the capacity of a channel
under a fixed-power jammer, and the optimal strategies for communication and jamming, derive from
fundamental information theoretic results (See Section~\ref{Sec:Game_Analysis}), these questions
are still open for a packetized communication system. 

Our contribution can be summarized as follows:

\begin{itemize}
\item We formulate the interaction between the communicating nodes and an adversary within a
game-theoretic context. We show the existence of the Nash Equilibrium for this non-typical game. 
We also show that the Nash Equilibrium strategies can be computed using Linear Programming.

\item We show that earlier information-theoretic capacity results for a jammed
channel correspond to a pure Nash Equilibrium (NE).

\item We further characterize the game by showing that, when both players are allowed to
randomize their actions (i.e., coding rate and jamming power) new mixed Nash equilibria appear with
surprising properties. We show the existence of a threshold ($J_{TH}$) such that if the jammer
average power exceeds $J_{TH}$, the channel capacity at the NE is the same as if the jammer was
using $J_{Max}$ all the time. 

\item We also show that the optimal NE strategies can be approximated by very simple (e.g.,
semi-uniform) distributions. Such strategies are very simple to implement in current hardware and software.
\end{itemize}

The rest of the paper is structured as follows. In Section \ref{Sec:System_Model}, we present our
model for the communication link, communicating nodes and the adversary. In Section
\ref{Sec:Game_Model}, we introduce the players, the \emph{transmitter} and the \emph{jammer}, and
their respective strategies and payoffs. We discuss how additional constraint on jammer's mixed
strategy space makes our game model different from a typical zero-sum game.
In Section~\ref{Sec:4}, we show that the Nash equilibrium indeed exists. We also prove the
existence of a threshold, $J_{TH}$, for the jammer and its effect on the game outcome. In
Section~\ref{Sec:Game_Analysis}, we study two particular cases. The case of a  \emph{powerful
jammer}, when jammer's average power is greater than the threshold, and the case of a \emph{weak
jammer}, when jammer's average power is less than the threshold. We will also provide transmitter's
optimal strategies in these two cases. In Section~\ref{Sec:Continuous}, we study the case where
players have infinite number of pure strategies (the continuous zero-sum game) and finally, we
conclude the paper in Section~\ref{Sec:Conclusion}.


\begin{table*}\label{Table:Parameters}
\centering \caption{Table of Notations and Parameters}
\renewcommand{\arraystretch}{1.5}
\begin{tabular}{|c|l|}
 \hline
  Parameter & Description \\
  \hline \hline
  $P_T$ & Transmitter's power\\
  $N$     & Noise power spectral density\\
  $J_{Max}$ & Jammer's maximum power per packet\\
  $J_{Ave}$ & Jammer's average power\\
  $J_{TH}$  & Jamming power threshold\\
  $J$       & Variable denoting jammer's power\\
  $J_T$     & Jamming power corresponding to the transmitter's rate \\ 
  \hline
$\boldsymbol{J}^T   = \begin{bmatrix} J_0 & \dots & J_j & \dots & J_{N_J} \end{bmatrix}_{1\times(N_J+1)}$ & Jamming power vector\\
  $J_j = \frac{j}{N_J}J_{Max}$ & \\
  \hline
  $\boldsymbol{R}^T = \begin{bmatrix} R_0 & \dots & R_i & \dots & R_{N_T} \end{bmatrix}_{1\times(N_T+1)}$ & Vector corresponding to transmitter's rates\\
  $R_i = \frac{1}{2}\log \left(1 + \frac{P_T}{N + \frac{i}{N_T}J_{Max}}\right)$ & \\
  \hline
  $ \boldsymbol{x}^T = \begin{bmatrix} x_0 & \dots & x_i & \dots & x_{N_T} \end{bmatrix}_{1\times(N_T+1)} \in \mathbb{X}$ & Transmitter's mixed-strategy vector\\
  $ \boldsymbol{y}^T = \begin{bmatrix} y_0 & \dots & y_j & \dots & y_{N_J} \end{bmatrix}_{1\times(N_J+1)} \in \mathbb{Y}$ & Jammer's mixed strategy vector\\
  $\mathbb{X,Y}$ & Mixed-strategy space, transmitter's and jammer's respectively\\
  $ C_{(N_T+1)\times(N_J+1)}\ \text{or}\ C(\boldsymbol{x},\boldsymbol{y})\ \text{or}\ C(J_{Ave})$ & Game matrix and expected game payoffs\\
  \hline
\end{tabular}
\end{table*}

\section{System Model}	\label{Sec:System_Model}

In this section we introduce and define our system model. The overall system model is shown in
Figure~\ref{Fig:Fig1}. The communication link between the transmitter and the receiver is an AWGN
channel with a fixed noise variance. Beside the channel noise, transmitted packets are being
disrupted by an additive jammer. Jammer's peak and average power are  assumed to be limited to
produce a more realistic model.

\subsection{Channel Model}\label{SubSec:Channel_Model}
The overall system model is shown in Figure~\ref{Fig:Fig1}. The communication link between the
transmitter and the receiver is assumed to be a single-hop, additive white Gaussian noise (AWGN)
channel with a fixed and known noise variance, $N$, referred to the receiver's front end.
Furthermore, the communication link is being disrupted by an additive adversary, the \emph{jammer}.
The jammer transmits radio signals to degrade the capacity between the transmitter and the
receiver. We assume transmissions are \emph{packet-based}, i.e., transmissions take place in
disjoint time intervals during which transmitter's and jammer's state (parameters) remain
unchanged. We assume packets are long enough that channel capacity theorem could be applied to each
packet being transmitted, this is justified by today's Internet protocols that use packet sizes of
up to $1,500$ bytes\footnote{IEEE 802.3 and IEEE 802.11x protocols allow MAC frame sizes of up to
1,642 and 2304 bytes respectively.}.

In section~\ref{Sec:Game_Model} we introduce and study a two-player zero-sum game in which
transmitter-receiver goal is to achieve highest possible rate while jammer tries to minimize the
achievable rate.

\begin{figure}
 \centering
 \includegraphics[width = 3.2in]{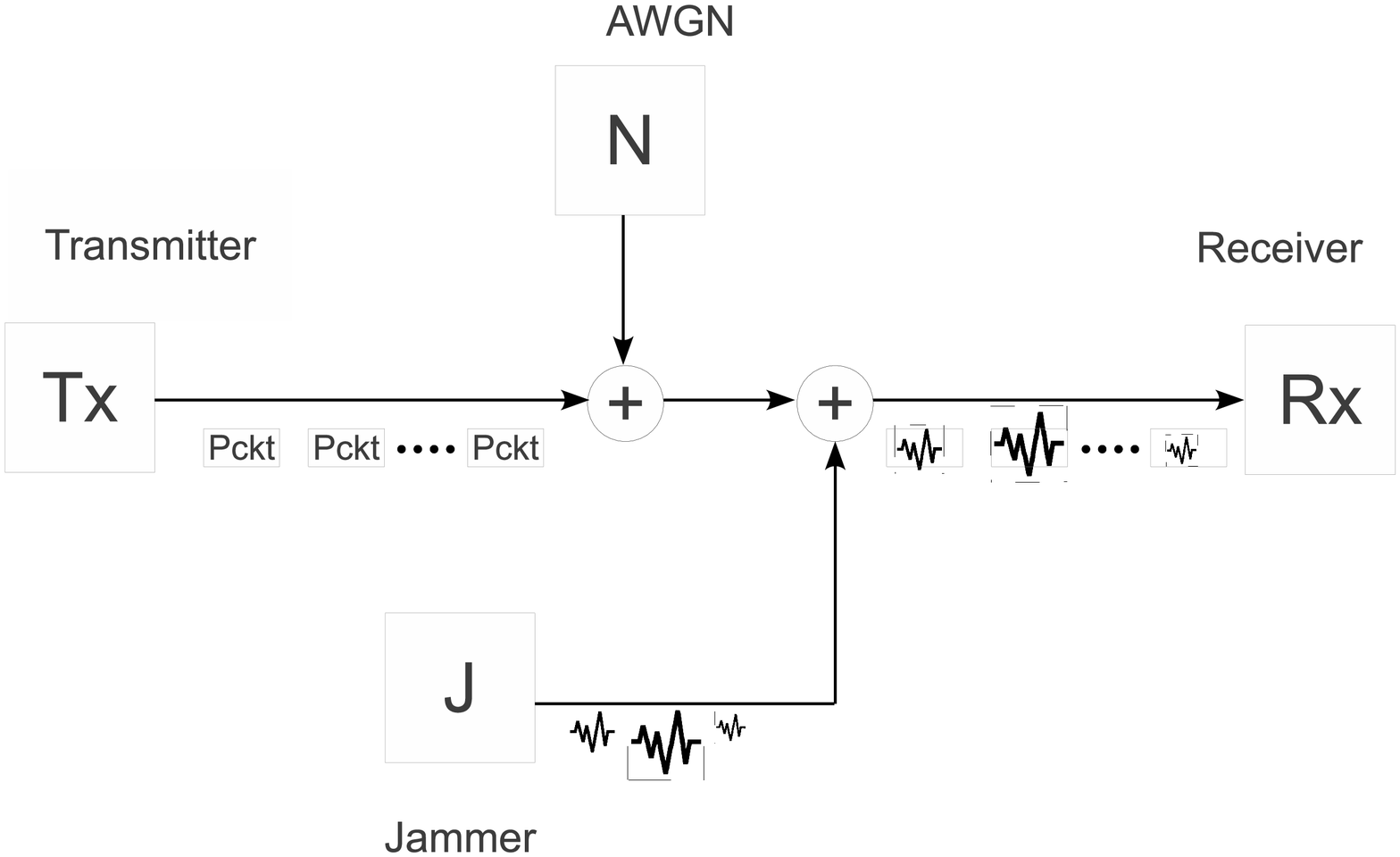}
 \caption{System model}
 \label{Fig:Fig1}
\end{figure}

\subsection{Jammer Model}\label{SubSec:Jammer_Model}
Radio jamming or simply \emph{jamming} is deliberate transmission of radio signals with the
intention of degrading a communication link. The effect of jammer on the communication link is
reduction of the effective signal to noise ratio (SNR) at the receiver and hence decreasing the
channel capacity. As long as reduction in effective signal to noise ratio is concerned, the jammer
can use arbitrary random signals for transmission but, it can be shown \cite{CoverJ06} that in the
AWGN channel with a fixed and known noise variance, a Gaussian jammer with a flat power spectral
density is the most effective in minimizing the the capacity between the transmitter and the
receiver. In other words, in the communication game described above, the optimal strategy for the
transmitter is to use a zero-mean white Gaussian input with variance equal to $P$, the transmitter
power, and the best strategy for the jammer is is to use a similar distribution with variance $J$,
the jammer power.

A fairly large number of jamming models have been proposed in the literature \cite{Peterson}. The
most benign jammer is the \emph{barrage noise jammer}.  The barrage noise jammer transmits
bandlimited white Gaussian noise with power spectral density (psd) of $J$. It is usually assumed
that the barrage noise jammer power spectrum covers exactly the same frequency range as the
communicating system. This kind of jammer simply increases the Gaussian noise level from $N$ to
$(N+J)$ at the receiver's front end. Another frequently used jamming model is the \emph{pulse-noise
jammer}. The pulse noise jammer transmits pulses of bandlimited white Gaussian noise having total
average power of $J_{Ave}$ referred to the receiver's front end. It is usually assumed that the
jammer chooses the center frequency and bandwidth of the noise to be the same as the transmitter's
center frequency and bandwidth. The jammer chooses its pulse duty factor to cause maximum
degradation to the communication link while maintaining the average jamming power $J_{Ave}$. For a
more realistic model, the pulse-noise jammer could be subject to a maximum peak power constraint.
Other jamming models, to name a few, are  the \emph{partial-band jammer} and
\emph{single/multiple-tune jammer}.

However, we study a more sophisticated jamming model. The jammer in study is a reactive and
additive jammer, i.e., he is only active when a packet is being transmitted and silent otherwise.
We assume that the jammer has a set of discrete jamming power levels uniformly distributed between
$J=0$ and $J=J_{Max}$. The jammer can choose any jamming power level given that he maintains an
overall average jamming power, $J_{Ave}$. The jammer uses  his available power levels according to
a distribution (his strategy), he chooses an optimal distribution to minimize the achievable
capacity of the communication link while maintaining his maximum and average power constraints,
i.e.,  $J_{Max}$ and $J_{Ave}$, respectively.

For reasons given in section \ref{SubSec:Transmitter_Model}, burst jamming (transmitting a burst of
 white noise to disrupt a few bits in a packet) is not an optimal jamming scheme. Hence, we assume
 the jammer remains active during the entire packet transmission, i.e., the jammer transmits a
 continuous Gaussian noise with a fixed variance $J \in \big[ 0, J_{Max} \big]$ for each transmitting
 packet.

\subsection{Transmitter Model}\label{SubSec:Transmitter_Model}

Transmitter has a rate adaptation block which enables him to transmit at different rates. Popular
techniques to increase or decrease the rate of a code are puncturing or extending. Puncturing and
extending increase the flexibility of the system without significantly increasing its complexity.
Considering jammer's activity, the transmitter changes his rate according to a distribution (his
strategy). Changing the rate can be accomplished using techniques like rate-compatible puncturing.
The transmitter chooses an optimal distribution to achieve the best possible average rate (payoff).
Same as before, we assume transmissions are \emph{packet-based}, i.e., transmissions are taken
place in disjoint time intervals during which, transmitter's rate remain unchanged. Transmitter's
model is shown in Figure~\ref{Fig:Transmitter}.

The interleaver block in transmitter's model is a countermeasure to burst errors and burst jamming.
Interleaving is frequently used in digital communications and storage devices to improve the burst
error correcting capabilities of a code. Burst errors are specially troublesome in short length
codes as they have very limited error correcting capabilities. In such codes, a few number of
errors could result in a decoding failure or an incorrect decoding. A few incorrectly decoded
codewords within a larger frame could make the entire frame corrupted.

Fortunately, combining effective interleaving schemes such as cryptographic interleaving and
capacity-achieving codes such as turbo codes and LDPC codes results in transmission schemes that
have good burst error correcting properties (see \cite{LinN04}) which make burst jamming ineffective.
Therefore, in our study we do not consider burst jamming and instead assume that the jammer remains
active during the entire packet transmission.

\begin{figure}
 \centering
 \includegraphics[width = 3.2in]{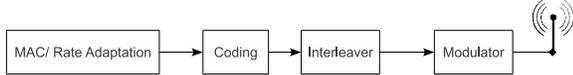}
 \caption{Transmitter Model}
 \label{Fig:Transmitter}
\end{figure}

\section{Game Model}	\label{Sec:Game_Model}

In this section we discuss the game setup in detail; we introduce and define the players, their
respective strategies and the constraints in the game. We present the game model and  define and
formulate the payoff function in a game theoretic frame work. As discussed in
section~\ref{SubSec:Jammer_Model}, in the AWGN channel, the additive white Gaussian jammer is the
optimal jammer, in the sense that the white Gaussian jammer minimizes the channel capacity.
Henceforward, we will only consider the additive Gaussian jammer.

We present the jammer's strategy set and introduce the jammer's average power constraint and its
impact on the mixed strategy space. The additional constraint makes our game model different from a
typical two-player zero-sum game. We also introduce transmitter's strategy set and define the game
utility function and the payoff matrix.

We begin by introducing a discrete version of the game to prove basic concepts and conclusions.
Generalization to the continuous case is given in Section~\ref{Sec:Continuous}

\subsection{\secit{The Jammer's} Strategy Set} \label{SubSec:Jammer}

The jammer has the option to select discrete values of jamming power, uniformly distributed over $
\big[ 0, J_{Max} \big]$. We assume there are $(N_J + 1)$  pure strategies available to the jammer.
Hence, the jammer's strategy set (set of jamming powers), $\mathcal{J}$, is given by
\begin{equation}\label{Eq:Jammer_Mixed_Set}
 \mathcal{J} = \Big\{J_j, 0 \leq j \leq N_J \Big\}
\end{equation}
where
\begin{equation}\label{Eq:Power_Levels}
 J_j = \frac{j}{N_J}J_{Max}
\end{equation}
We can write the possible jammer power levels in vector form, hence the jammer's pure strategies
vector, $\boldsymbol{J}$, is
\begin{equation}\label{Eq:Jammer_Vector}
  \boldsymbol{J}^T   = \begin{bmatrix} J_0 & \dots & J_j & \dots & J_{N_J} \end{bmatrix}_{1\times(N_J+1)}
\end{equation}
where $^T$ indicates transposition and $J_j$ is defined in \eqref{Eq:Power_Levels}. Unlike typical
zero-sum games in which there are no other constraints on the mixed-strategies, in our model, the
jammer's mixed-strategy must satisfy the additional average power constraint, $J_{Ave} \leq
J_{Max}$. Hence, in this model, not all mixed-strategies (and not even the pure strategies that are
greater than $J_{Ave}$) are feasible strategies \cite[Sec. III.7]{Owen}. If we let $\boldsymbol{y}$ be the jammer's
mixed-strategy vector and $\mathbb{Y}$ be the $(N_J + 1)$-simplex, we have the following relations:
\begin{align}
\boldsymbol{y}^T& = \begin{bmatrix} y_0 & \dots & y_j & \dots & y_{N_J} \end{bmatrix}_{1\times(N_J+1)} \in \mathbb{Y}\\
  \sum_{j=0}^{N_J} y_j &= 1; \quad y_j \geq 0, \quad 0\leq j \leq N_J \nonumber
\end{align}
By using the jammer's pure strategy vector we define the constrained mixed strategy space
$\mathbb{Y}_{\text{E}}$ as
\begin{equation}\label{Eq:Jammer_Mixed_Set_1}
 \mathbb{Y}_{\text{E}} = \{ \boldsymbol{y} \in \mathbb{Y} | \; \boldsymbol{y}^T \cdot
 \boldsymbol{J} = J_{Ave} \}
\end{equation}
which is a subset of the $(N_J+1)$-simplex that satisfies the average power constraint. By
substituting the equality constraint in \eqref{Eq:Jammer_Mixed_Set_1} with the less than or equal
sign, we define a new mixed strategy space which consists of all mixed strategies that result in an
average power less than or equal to $J_{Ave}$. The new mixed-strategy space,
$\mathbb{Y}_{\text{LE}}$, is
\begin{equation}\label{Eq:Jammer_Mixed_Set_2}
 \mathbb{Y}_{\text{LE}} = \{ \boldsymbol{y} \in \mathbb{Y} | \; \boldsymbol{y}^T \cdot
 \boldsymbol{J} \leq J_{Ave} \}
\end{equation}
It is obvious that
\begin{equation*}
  \mathbb{Y}_{\text{E}} \subset \mathbb{Y}_{\text{LE}} \subset \mathbb{Y}
\end{equation*}

\begin{figure}
 \centering
 \includegraphics[width = 3.2in]{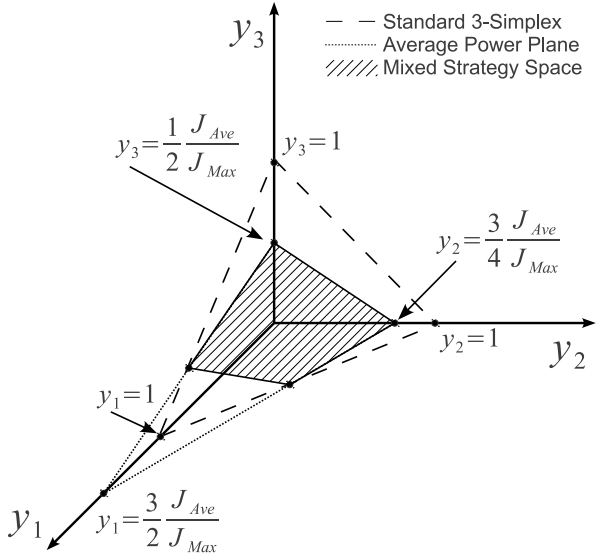}
 \caption{A typical mixed-strategy space for average power constrained jammer $(J_{Ave} < J_{Max})$.}
 \label{Fig:Jammer_Mixed_Space}
\end{figure}

A typical mixed strategy space with equality constraint, as defined in
\eqref{Eq:Jammer_Mixed_Set_1}, is shown in Figure~\ref{Fig:Jammer_Mixed_Space} where $N_J = N_T =
3$. In this case jammer's mixed and pure strategy vectors are $\begin{bmatrix} y_0 & y_1 & y_2 &
y_3 \end{bmatrix}_{1\times 4}$ and $\begin{bmatrix} 0 & \frac{1}{3}J_{Max} & \frac{2}{3}J_{Max} &
J_{Max} \end{bmatrix}_{1\times 4}$.

Since by introducing the new mixed strategy spaces of \eqref{Eq:Jammer_Mixed_Set_1} and
\eqref{Eq:Jammer_Mixed_Set_2} we are eliminating some mixed strategies that could have been
otherwise selected, the existence of the Nash equilibrium for this case must be first established.
This is unlike a typical zero-sum game with a finite number of pure strategies in which the
existence of the Nash Equilibrium is assured. In section~\ref{SubSec:Nash_Existence}, we provide an
outline of the proof of the existence of the Nash Equilibrium in our game where the jammer's mixed
strategy space is limited to $\mathbb{Y}_{\text{E}}$ or $\mathbb{Y}_{\text{LE}}$.

\subsection{\secit{The Transmitter's} Strategy Set}\label{SubSec:Transmitter}

The transmitter strategy set is a set of discrete transmission rates corresponding to different
assumed jamming power levels, i.e, the transmitter chooses his rate, $R$, from the set
\begin{equation}
\mathcal{R} = \Big\{ R_i, 0 \leq i \leq N_T  \Big\}
\end{equation}
where
\begin{equation}\label{Eq:Rate}
  R_i = \frac{1}{2}\log \left( 1 + \frac{P_T}{N+\frac{i}{N_T}J_{Max}} \right)
\end{equation}
and $\frac{i}{N_T}J_{Max}$ denotes the jammer's power level assumed by the transmitter. If the
actual jammer's power level is less than or equal to the assumed value of $\frac{i}{N_T}J_{Max}$,
then transmission at rate $R_i$ is possible, otherwise reliable transmission is not possible, the
packet is lost, and the actual transmission rate  drops to zero. Same as the case with the jammer,
we define the vector of mixed-strategies for the transmitter, $\boldsymbol{x}$, as
\begin{equation}
 \boldsymbol{x}^T = \begin{bmatrix} x_0 & \dots & x_i & \dots & x_{N_T}
 \end{bmatrix}_{1\times(N_T+1)} \in \mathbb{X}
\end{equation}
where $\mathbb{X}$ is the $(N_T+1)$-simplex with no additional constraints.

\subsection{The Payoff Function}\label{SubSec:Payoff}
The \emph{payoff} to the transmitter is defined assuming transmissions at the channel capacity.
Defining the payoff based on channel capacity (or other variations of channel capacity) is a common
practice in the games involving a transmitter-receiver pair and an adversary \cite{Giannakis08,
Koorosh11,Altman07}.

Because transmissions occur in the presence of an adversary, recovery of the transmitted
information at the receiver is not always guaranteed. The information can only be recovered when
the actual jamming power, $J$, is less than or equal to the jamming power level assumed by the
transmitter, $J_T$, i.e., if and only if $J \leq J_T$. If $J_T < J$, the corresponding transmission
rate would exceed the channel capacity and the information would be lost. Therefore, the
transmitter's payoff function is given by
\begin{equation}\label{Eq:Payoff}
 C(J_T,J) =
    \begin{cases}
      R(J_T) = \frac{1}{2}\log \left( 1 + \frac{P_T}{N+J_T} \right) & J_T \geq J\\
      0 & J_T<J\\
    \end{cases}
\end{equation}
Since the game in study is a zero-sum game, the payoff to the jammer is the negative of the
transmitter's payoff. We can formulate the payoffs in a payoff matrix where the transmitter and the
jammer would be the row and column players respectively. The resulting payoff matrix, $C$, is
\begin{equation}\label{Eq:Game_Matrix}
  C =
  \begin{bmatrix}
    R_0     & 0       & 0     & \dots   & 0      \\
    \vdots  & \ddots  & 0     & 0       & \vdots \\
    R_i     & \dots   & R_i   & 0       & \vdots \\
    \vdots  &         &       & \ddots  & \vdots \\
    R_{N_T} & R_{N_T} & \dots & R_{N_T} & R_{N_T}\\
  \end{bmatrix}_{(N_T+1)\times(N_T+1)}
\end{equation}
where $R_i$ is defined in \eqref{Eq:Rate}. The expected payoff (or the game value) of the game is
\begin{equation}\label{Eq:Expected_Payoff}
  C(\boldsymbol{x},\boldsymbol{y}) = \boldsymbol{x}^T \cdot C \cdot \boldsymbol{y}, \qquad
  \boldsymbol{y} \in \mathbb{Y}_{\text{E}}\ \text{or}\ \mathbb{Y}_{\text{LE}}
\end{equation}

\newtheorem{lem}{Lemma}

In defining \eqref{Eq:Game_Matrix} we have assumed $N_J=N_T$. As discussed below, without loss of
generality, we can always assume that $N_T = N_J$.

\begin{figure}
 \centering
 \includegraphics[width = 3.2in]{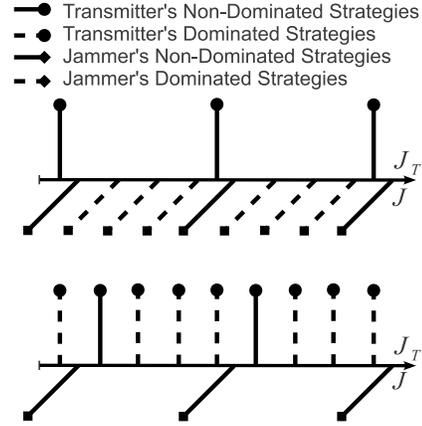}
 \caption{Uniformly distributed pure strategies; $N_J > N_T$ (top) and $N_J < N_T$ (bottom)}
 \label{Fig:Dominate_Strategies}
\end{figure}

\begin{lem}\label{Lem:N_T}
  Let $C$ be the payoff matrix in the two-player zero-sum game defined by the utility function
  \eqref{Eq:Payoff}. The payoff matrix resulted by removing the dominated strategies is a square lower
  triangular matrix with  size less than or equal to $\min \big[ N_T, N_J \big]$.
 Furthermore, if the power levels were uniformly distributed over $\big[ 0, J_{Max} \big]$,
  the size of the non-dominated payoff matrix would be the minimum of $N_T$ and $N_J$.
\end{lem}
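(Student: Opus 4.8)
The plan is to read off the entries of $C$ explicitly and exploit the monotone ``staircase'' structure they create. Writing $\theta_i = \frac{i}{N_T}J_{Max}$ for the power level assumed by rate $R_i$, the utility \eqref{Eq:Payoff} gives
\begin{equation*}
 C_{ij} = R_i\,\mathbf{1}\{J_j \le \theta_i\} = R_i\,\mathbf{1}\{\, jN_T \le iN_J \,\}, \qquad 0\le i\le N_T,\ 0\le j\le N_J .
\end{equation*}
Since $R_i$ is strictly decreasing in $i$ and the admissible region $\{(i,j): jN_T\le iN_J\}$ lies on and below the line $j = iN_J/N_T$, the support of $C$ is a staircase: row $i$ is supported on columns $0,\dots,k_i$ with $k_i=\lfloor iN_J/N_T\rfloor$, and column $j$ is supported on rows $m_j,\dots,N_T$ with $m_j=\lceil jN_T/N_J\rceil$. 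First I would record that, after re-indexing the surviving rows and columns in increasing order, this support is exactly lower triangular with a nonzero diagonal.

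Next I would eliminate redundant strategies on the two sides separately, since the transmitter is unconstrained whereas the jammer is not. On the transmitter side the reduction is ordinary dominance: two rows $i<i'$ with the same support ($k_i=k_{i'}$) beat exactly the same jamming powers, and since $R_i>R_{i'}$ the higher-rate row strictly dominates; I keep only the smallest index in each block of equal $k_i$. No further row is dominated, even by a mixture, because each surviving row is the unique one still beating its right-most column (a convex combination of the others is $0$ there), so an easy induction rules out mixed dominance. The surviving rows are thus in bijection with the distinct values of $k_i$.

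On the jammer side I would be deliberately careful, and this is the step I expect to be the main obstacle. The naive column dominance ``higher power beats lower power'' ($C_{ij'}\le C_{ij}$ for $j'>j$) is \emph{not} available here, because the average-power constraint defining $\mathbb{Y}_{\text{E}}$ and $\mathbb{Y}_{\text{LE}}$ can force the jammer onto low powers; this is precisely what distinguishes the game from a standard zero-sum game. I would therefore remove only \emph{exact duplicate} columns: columns $j<j'$ with $m_j=m_{j'}$ have identical payoff vectors, so collapsing them changes nothing, and I keep the \emph{lowest-power} representative of each block so that feasibility in $\mathbb{Y}_{\text{LE}}$ and the power budget are preserved. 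In particular the zero-power column and the rate-$R_0$ row both survive, which is exactly why iterated elimination does not collapse $C$ to a single entry as it would in the unconstrained game. The surviving columns are then in bijection with the distinct values of $m_j$.

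Finally I would count and conclude. The number of distinct values of $k_i=\lfloor iN_J/N_T\rfloor$ as $i$ runs over $0,\dots,N_T$ equals the number of distinct values of $m_j=\lceil jN_T/N_J\rceil$ as $j$ runs over $0,\dots,N_J$ (both count the distinct reachable threshold levels, i.e.\ the steps of the staircase boundary), so the reduced matrix has equally many rows and columns and is square; deleting rows and columns in an order-preserving fashion preserves the triangular support, so it is lower triangular. Reading off these counts for uniformly spaced powers (step $N_J/N_T$ or $N_T/N_J$ across grids of the respective sizes) yields the size asserted in the statement, namely the smaller of the two strategy-set dimensions, with the bound $\min[N_T,N_J]$ being attained in the uniform case. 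Since the retained jammer strategies are genuine, non-dominated strategies and the discarded ones are duplicates represented by their lowest-power columns, the average-power constraint transfers verbatim to the reduced game, and we may relabel the common dimension to assume $N_T=N_J$ without loss of generality.
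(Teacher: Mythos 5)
Your proposal is correct and follows essentially the same route as the paper's own proof: both arguments rest on the observation that strategies falling between two consecutive thresholds of the opponent are redundant (the jammer keeps only the lowest-power level in each gap because of the average-power budget, the transmitter only the highest rate), and both then count one survivor per gap to conclude the reduced matrix is square, lower triangular, and of size $\min(N_T,N_J)$ in the uniform case. Your write-up is simply a more explicit rendering of the paper's pigeonhole-and-figure argument (floor/ceiling block counting, plus the worthwhile remarks that the power constraint is exactly what justifies ``keep the lowest power'' and that surviving rows are not even mixed-dominated --- though there the parenthetical should read that any mixture of the other rows is \emph{strictly below} $R_i$ in column $k_i$, not zero, since lower-rate rows are also nonzero there).
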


\begin{proof}
Assume the jammer's power levels are arbitrary distributed over some range, $\big[ 0, J_{Max}
\big]$, and $N_T < N_J$.
A typical case where
$N_T < N_J$ is depicted in Figure \ref{Fig:Dominate_Strategies} (top). In Figure
\ref{Fig:Dominate_Strategies}, the transmitter's pure strategies are mapped to the jammer's power
levels for better visualization. Between some of the transmitter's pure strategies there might be a
pure strategy of the jammer but since $N_T < N_J$, according to the \emph{Pigeonhole} principle,
between at least two of the transmitter's pure strategies (not necessarily any two pure strategy as
sketched) there must be  more than one jamming power level (shown as dashed or solid lines ending
in squares). Any of these jamming power levels (or pure strategies) could be used to terminate the
information transmitted by the rate corresponding to the power level immediately to the left of
them (shown as solid line ending in circles). From these pure strategies, a rational jammer would
choose the one with the lowest power level (the solid line) and hence, it would dominate the rest
(dashed lines). therefore, the number of non-dominated pure strategies for the jammer is at most
equal to the the number of the transmitter's pure strategies (first part of the lemma).

If the pure strategies were uniformly distributed over $\big[ 0,$ $J_{Max} \big]$, as
sketched, for every transmitter's pure strategy there would be exactly one non-dominated strategy
for the jammer and hence, there would be no intention for the jammer to use more pure strategies
than the transmitter. The same discussion can be given for the number of pure strategies a rational
transmitter should use for the case $N_T > N_J$ (see Figure \ref{Fig:Dominate_Strategies}(
bottom)). Henceforward, without loss of generality, we assume $N_T = N_J$.
\end{proof}

As a consequence of  Lemma \ref{Lem:N_T}, in our study, we need to consider only square matrices
which simplifies further studies and assumptions. In the section that follows, we will study the
outcome of the game when jammer's average power assumes different values.


\section{Game Characterization}	\label{Sec:4}

In this section, we study the basic properties of the game. We will show that although we have put
an additional constraint on the jammer's mixed strategy space, the existence of the Nash
equilibrium is still guaranteed.

Furthermore, we will show that by randomizing his strategy, the jammer can force the transmitter to
operate at his lowest rate, given that he uses an average jamming power, $J_{Ave}$, that is more
than a certain threshold, $J_{TH} < J_{Max}$. We also provide an upper bound for $J_{TH}$ in this
section.

\subsection{Existence of the Nash Equilibrium}\label{SubSec:Nash_Existence}
We begin this section by the following lemma that shows existence of the Nash equilibrium under the
additional average power constraint is guaranteed.
  \begin{lem}
   For the two-player zero-sum game defined by the utility function $C \left( J_T,J \right)$,
   given in \eqref{Eq:Payoff} and the payoff matrix $C$, given by \eqref{Eq:Game_Matrix} and
   the transmitter's mixed strategy, $ \boldsymbol{x} \in \mathbb{X} $, and the jammer's mixed
   strategy, $\boldsymbol{y} \in \mathbb{Y}_{\text{E}}\ \text{or}\ \mathbb{Y}_{\text{LE}}$ (defined
   in \eqref{Eq:Jammer_Mixed_Set_1} and \eqref{Eq:Jammer_Mixed_Set_2}, respectively), at least one Nash
   equilibrium exists.
  \end{lem}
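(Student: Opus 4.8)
The plan is to treat this as a two-player zero-sum game whose strategy sets are compact convex subsets of Euclidean space and then invoke a minimax theorem, rather than appealing to von Neumann's finite-game theorem. The reason the latter does not apply directly is precisely the point raised in the preceding discussion: while the transmitter's space $\mathbb{X}$ is the full $(N_T+1)$-simplex, the jammer is restricted to $\mathbb{Y}_{\text{E}}$ or $\mathbb{Y}_{\text{LE}}$, which are \emph{proper} subsets of the simplex $\mathbb{Y}$, so the classical guarantee of a saddle point over full simplices is no longer available. The first step is to record that the expected payoff $C(\boldsymbol{x},\boldsymbol{y}) = \boldsymbol{x}^T C \boldsymbol{y}$ is bilinear; in particular it is continuous on $\mathbb{X}\times\mathbb{Y}$, linear (hence concave) in $\boldsymbol{x}$ for every fixed $\boldsymbol{y}$, and linear (hence convex) in $\boldsymbol{y}$ for every fixed $\boldsymbol{x}$.

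The second step, which carries the real content, is to verify that the restricted jammer space retains the hypotheses needed for a minimax theorem. I would argue that $\mathbb{Y}_{\text{E}}$ is the intersection of the compact convex simplex $\mathbb{Y}$ with the hyperplane $\{\boldsymbol{y}: \boldsymbol{y}^T\boldsymbol{J} = J_{Ave}\}$, and $\mathbb{Y}_{\text{LE}}$ its intersection with the half-space $\{\boldsymbol{y}: \boldsymbol{y}^T\boldsymbol{J} \le J_{Ave}\}$. Each is therefore convex, and being a closed subset of the compact set $\mathbb{Y}$, each is compact. Nonemptiness is the one inequality that must be checked: since $J_0 = 0 \le J_{Ave} \le J_{Max} = J_{N_J}$, the scalar $J_{Ave}$ lies in the convex hull of the pure power levels, so some $\boldsymbol{y}\in\mathbb{Y}$ attains $\boldsymbol{y}^T\boldsymbol{J} = J_{Ave}$ (e.g.\ a two-point distribution on $J_0$ and $J_{N_J}$), and this point lies in both $\mathbb{Y}_{\text{E}}$ and $\mathbb{Y}_{\text{LE}}$. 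Likewise $\mathbb{X}$ is the standard simplex and is nonempty, compact, and convex.

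With both strategy spaces nonempty, compact, and convex and the payoff bilinear, I would apply Sion's minimax theorem to conclude
\begin{equation*}
\max_{\boldsymbol{x}\in\mathbb{X}}\ \min_{\boldsymbol{y}}\ \boldsymbol{x}^T C \boldsymbol{y}
  = \min_{\boldsymbol{y}}\ \max_{\boldsymbol{x}\in\mathbb{X}}\ \boldsymbol{x}^T C \boldsymbol{y},
\end{equation*}
where $\boldsymbol{y}$ ranges over $\mathbb{Y}_{\text{E}}$ (or $\mathbb{Y}_{\text{LE}}$). Compactness guarantees that the outer optimizers are attained, and the common value together with any pair of optimizers $(\boldsymbol{x}^{*},\boldsymbol{y}^{*})$ forms a saddle point, which for a zero-sum game is exactly a Nash equilibrium. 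Equivalently, one may invoke the Debreu--Glicksberg--Fan existence theorem for games with continuous payoffs that are quasi-concave in each player's own strategy over compact convex action sets, or run a Kakutani fixed-point argument on the best-response correspondence. I expect the only genuine obstacle to be the second step: one must make explicit that intersecting the simplex with the average-power constraint leaves a nonempty compact convex set, since it is this restriction---and nothing else---that separates the present game from the textbook finite zero-sum setting.
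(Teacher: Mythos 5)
Your proof is correct, but it takes a genuinely different route from the paper's. The paper's (outline) proof stays inside Nash's original fixed-point machinery: it invokes Sperner's lemma and Brouwer's fixed-point theorem, arguing that the constrained mixed-strategy space can be radially projected onto a lower-dimensional simplex and that the fixed-point argument extends to any compact convex set, so the usual existence proof for finite games carries over; a footnote notes Kakutani as an alternative. You instead exploit the zero-sum bilinear structure and apply Sion's minimax theorem to the nonempty, compact, convex sets $\mathbb{X}$ and $\mathbb{Y}_{\text{E}}$ (or $\mathbb{Y}_{\text{LE}}$), obtaining the saddle point---hence the Nash equilibrium---directly, with your mention of Kakutani/Debreu--Glicksberg--Fan paralleling the paper's footnote. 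Both arguments rest on the same key observation, namely that intersecting the simplex with the hyperplane $\boldsymbol{y}^T\boldsymbol{J}=J_{Ave}$ or the half-space $\boldsymbol{y}^T\boldsymbol{J}\leq J_{Ave}$ preserves convexity and compactness; but your write-up makes explicit a point the paper glosses over, the nonemptiness check, which holds precisely because $0 \leq J_{Ave} \leq J_{Max}$ places $J_{Ave}$ in the convex hull of the power levels (witnessed by a two-point distribution on $0$ and $J_{Max}$). What your route buys: it avoids fixed-point theorems entirely, it delivers attainment of the optima (and thus the game value) rather than just existence of an equilibrium, and---since $\mathbb{Y}_{\text{E}}$ and $\mathbb{Y}_{\text{LE}}$ are polytopes---the minimax equality can even be recovered from linear programming duality, which dovetails with the paper's separate claim that the equilibrium strategies are computable by LP. What the paper's route buys: it never uses the zero-sum structure, so it would extend unchanged to general-sum variants of the transmitter--jammer game, at the cost of a considerably vaguer argument (the ``radial projection'' step is asserted rather than carried out).
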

Nash in his 1951 seminal paper, ``Non Cooperative Games'' \cite{Nash51}, proved that for any game
with finite set of pure strategies, there exists at least one  (pure or mixed) equilibrium such
that no player can do better by unilaterally deviating from his strategy. In the proof of the
existence of the Nash equilibrium, no additional constraints were assumed on the mixed strategy
spaces. But, in our game model, we are assuming an additional constraint on the jammer's mixed
strategy space; the jammer must maintain a fixed or maximum average jamming power (corresponding to
\eqref{Eq:Jammer_Mixed_Set_1} and \eqref{Eq:Jammer_Mixed_Set_2}, respectively). These additional
assumptions change the jammer's mixed strategy space from the n-simplex to a subset of it.
Therefore, the Nash equilibrium theorem cannot be applied to our model directly and the existence
of the Nash equilibrium must be established.

\begin{proof}[(Outline)]
The proof of the existence of Nash equilibrium hinges on the \emph{Sperner's lemma} and 
\emph{Brouwer's fixed point theorem} and a corollary of this theorem on simplotopes \footnote{
There are alternative proofs for the existence of the Nash equilibrium, i.e., using \emph{Kakutani
fixed point theorem} \cite{Owen}. }. Sperner's lemma applies to simplicially subdivided n-simplexes. It can
easily be shown that by using a \emph{radial projection}, the mixed strategy space in our model,
which is a result of additional constraint of maintaining an average jamming power (or maintaining
a maximum average power), can be projected to an appropriate lower dimension  m-simplex where $m <
n$. A similar argument can be used to generalize the Brouwer's fixed point theorem to any arbitrary
convex and compact set. Since the additional average power constraint does not effect the convexity
or compactness of the mixed strategy space, we can conclude that all the conditions and
requirements assumed by the Sperner's lemma and the Brouwer's fixed point theorem are satisfied \cite{Basar99} and
the existence of the Nash equilibrium for our problem is guaranteed.
\end{proof}

\subsection{Existence of Jamming Power Threshold}\label{SubSec:Power_Threshold}

\begin{figure}
 \centering
 \includegraphics[width = 3.2in]{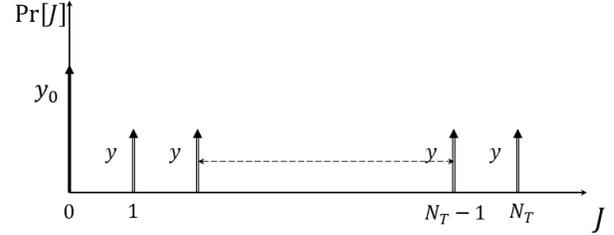}
 \caption{Semi-Uniform Distribution}
 \label{Fig:Uniform}
\end{figure}

\begin{figure*}
 \centering
 \includegraphics[width = 6in]{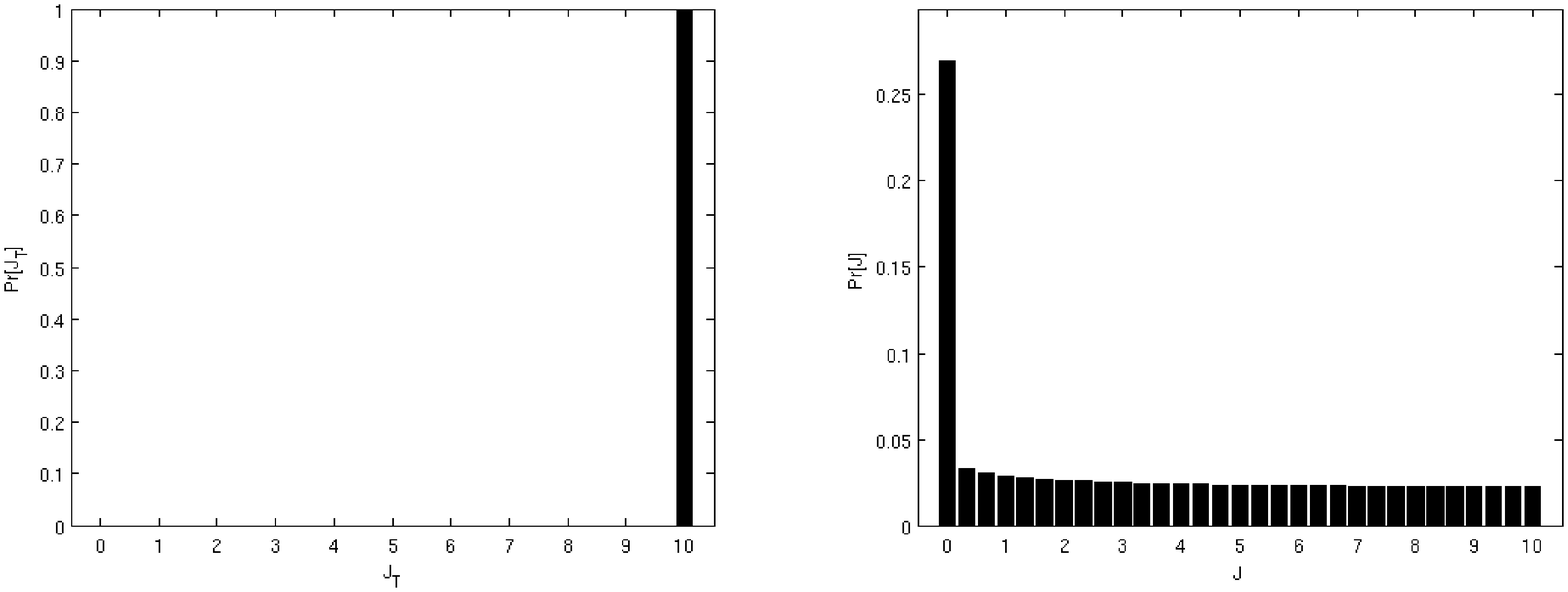}
 \caption{Typical optimal mixed-strategies for the transmitter (left) and the jammer (right) for
 $J_{Ave} \geq J_{TH} $} \label{Fig:Powerful_Jammer}
\end{figure*}

The following theorem proves the existence of a threshold jammer power that plays an important role
in our further development.
\newtheorem{theorem}{Theorem}
\begin{theorem}\label{The:Threshold}
  For the two-player zero-sum game defined with the utility function $C \left( J_T,J \right)$,
  given in \eqref{Eq:Payoff}, and the payoff matrix $C$, given in \eqref{Eq:Game_Matrix}, and the
  transmitter's mixed strategy, $ \boldsymbol{x} \in \mathbb{X} $, and the jammer's mixed strategy
  $\boldsymbol{y} \in \mathbb{Y}_{\text{LE}}$, given in \eqref{Eq:Jammer_Mixed_Set_2} and for all
  $P_T,N,J_{Max}~>~0 $
\begin{equation*}
 	 \exists J_{TH};\quad 0 < J_{TH} < J_{Max}
 \end{equation*}
such that, if $J_{Ave} \geq J_{TH}$ then, $\exists \boldsymbol{y}^* \in \mathbb{Y}_{LE}$ for which
we have
\begin{equation}
  \begin{aligned}
    & \boldsymbol{x}^{*^T} = \begin{bmatrix} \boldsymbol{0}_{1\times N_T} &  1 \\ \end{bmatrix}_{1\times(N_T+1)}\\
    & C(\boldsymbol{x}^*,\boldsymbol{y}^*) = R_{N_T}
  \end{aligned}
\end{equation}
where $\boldsymbol{x}^*,\boldsymbol{y}^*$ are transmitter's and jammer's optimal mixed-strategies, 
respectively and $ C(\boldsymbol{x}^*,\boldsymbol{y}^*) $ represents the value of the game.
\end{theorem}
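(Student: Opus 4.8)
The plan is to exploit the lower-triangular structure of the payoff matrix $C$ in \eqref{Eq:Game_Matrix}. The key observation is that its bottom row is constant and equal to $R_{N_T}$: when the transmitter places all its mass on the highest-index rate (the strategy $\boldsymbol{x}^*$ of the statement, $\boldsymbol{x}^{*T}=[\,\boldsymbol{0}_{1\times N_T}\ 1\,]$) he assumes the maximum jamming power $J_{Max}$, so $J\le J_{Max}$ always holds and the packet is never lost. Hence $\boldsymbol{x}^{*T} C \boldsymbol{y} = R_{N_T}$ for \emph{every} $\boldsymbol{y}$, and the transmitter can unilaterally guarantee at least $R_{N_T}$; this yields the lower bound $C(\boldsymbol{x}^*,\boldsymbol{y}^*)\ge R_{N_T}$. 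It then remains to produce a feasible jammer strategy $\boldsymbol{y}^*\in\mathbb{Y}_{\text{LE}}$ that caps the transmitter's payoff at $R_{N_T}$, so that $(\boldsymbol{x}^*,\boldsymbol{y}^*)$ is a saddle point with value exactly $R_{N_T}$.

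To build such a $\boldsymbol{y}^*$, I would read off the $i$-th component of $C\boldsymbol{y}$ directly from the triangular structure,
\begin{equation*}
 (C\boldsymbol{y})_i = \sum_{j=0}^{i} R_i\, y_j = R_i\, F_i, \qquad F_i := \sum_{j=0}^{i} y_j,
\end{equation*}
and choose $\boldsymbol{y}^*$ as the \emph{equalizing} strategy that makes every transmitter rate yield the same payoff, i.e.\ set $R_i F_i = R_{N_T}$, forcing the cumulative distribution
\begin{equation*}
 F_i = \frac{R_{N_T}}{R_i}, \qquad 0 \le i \le N_T.
\end{equation*}
Because $R_i$ in \eqref{Eq:Rate} is strictly decreasing in $i$ (the denominator $N+\tfrac{i}{N_T}J_{Max}$ increases when $P_T,J_{Max}>0$), the sequence $F_i$ is strictly increasing with $F_{N_T}=1$ and $0<F_0<1$; hence $y_i^* = F_i - F_{i-1} > 0$ and $\sum_i y_i^* = 1$, so $\boldsymbol{y}^*$ is a legitimate mixed strategy. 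With this choice $(C\boldsymbol{y}^*)_i = R_{N_T}$ for all $i$, whence $\boldsymbol{x}^T C \boldsymbol{y}^* = R_{N_T}$ for every $\boldsymbol{x}\in\mathbb{X}$, giving the matching upper bound.

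The remaining task is to determine when $\boldsymbol{y}^*$ respects the average-power constraint, which is precisely what defines $J_{TH}$. Using Abel summation, $\sum_{j} j\,y_j^* = N_T F_{N_T} - \sum_{j=0}^{N_T-1} F_j$, so the average power of $\boldsymbol{y}^*$ evaluates to
\begin{equation*}
 \boldsymbol{y}^{*T}\!\cdot\boldsymbol{J} = J_{Max}\left(1 - \frac{1}{N_T}\sum_{j=0}^{N_T-1}\frac{R_{N_T}}{R_j}\right) =: J_{TH}.
\end{equation*}
Since each ratio $R_{N_T}/R_j\in(0,1)$ for $j<N_T$, the bracketed sum lies in $(0,N_T)$, which immediately gives the strict bounds $0<J_{TH}<J_{Max}$ required by the statement. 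Finally, whenever $J_{Ave}\ge J_{TH}$ the strategy $\boldsymbol{y}^*$ satisfies $\boldsymbol{y}^{*T}\!\cdot\boldsymbol{J}=J_{TH}\le J_{Ave}$, so $\boldsymbol{y}^*\in\mathbb{Y}_{\text{LE}}$, and combining the two bounds shows $(\boldsymbol{x}^*,\boldsymbol{y}^*)$ is a Nash equilibrium with $C(\boldsymbol{x}^*,\boldsymbol{y}^*)=R_{N_T}$.

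I expect the main obstacle to be not the saddle-point verification (immediate once the equalizer is in hand) but justifying that $J_{TH}$ is the \emph{correct} threshold, i.e.\ that the equalizing strategy is the cheapest value-capping strategy, so the jammer cannot force $R_{N_T}$ when $J_{Ave}<J_{TH}$. This needs the minimization argument: among all $\boldsymbol{y}$ with $(C\boldsymbol{y})_i\le R_{N_T}$ (equivalently $F_i\le R_{N_T}/R_i$ for $i<N_T$), the power $\tfrac{J_{Max}}{N_T}\bigl(N_T-\sum_{j<N_T}F_j\bigr)$ is minimized by saturating every cap $F_j=R_{N_T}/R_j$, which is feasible precisely because these upper bounds are themselves increasing in $j$. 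Care must also be taken to confirm monotonicity of the saturated $F_j$ so the $y_j^*$ remain non-negative; this is exactly where the strict monotonicity of the rate function does the essential work.
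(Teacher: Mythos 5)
Your proof is correct, but it takes a genuinely different route from the paper's. The paper proves this theorem with a deliberately \emph{suboptimal} jammer strategy: the semi-uniform pmf with $y_0 = 1 - \frac{2N_T}{N_T+1}\frac{J_{Ave}}{J_{Max}}$ and all remaining mass spread equally, combined with an exchange argument: it computes the change $\Delta C$ in expected payoff when the transmitter shifts probability $\delta$ from an arbitrary rate $R_i$ to the lowest rate $R_{N_T}$, and shows $\Delta C > 0$ whenever $J_{Ave} \geq Z_i = \frac{1}{2}J_{Max}\frac{N_T+1}{N_T-i}\left(1-\frac{R_{N_T}}{R_i}\right)$; iterating the shift shows the pure lowest-rate strategy is a best response to the semi-uniform jammer once $J_{Ave} \geq \max_i Z_i$. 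Because the jammer is restricted to semi-uniform strategies there, that argument yields only an \emph{upper bound} on the threshold, namely \eqref{Eq:J_TH_Upper_Bound2}, as the paper itself notes in a footnote. Your equalizing construction is instead exactly the strategy $\widehat{\boldsymbol{y}}$ that the paper introduces only later, in Appendix~\ref{App:II} (see \eqref{Eq:y_Hat} with $m+1 = N_T$): cumulative distribution $F_i = R_{N_T}/R_i$, i.e.\ $y_0^* = R_{N_T}/R_0$ and $y_j^* = \left(R_j^{-1} - R_{j-1}^{-1}\right)R_{N_T}$, and your Abel-summation computation of its average power reproduces the exact threshold $J_{TH} = \left(1 - \frac{1}{N_T}\alpha^{-1}R_{N_T}\right)J_{Max}$ of Theorem~\ref{The:Powerful_Jammer} (equation \eqref{Eq:Power_Threshold}). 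So your route buys the sharp threshold and an immediate saddle-point verification (constant bottom row gives the lower bound, the equalizer gives the upper bound) in a single step, whereas the paper's proof of this theorem buys something it explicitly advertises as a contribution: even a trivially simple, hardware-friendly semi-uniform jammer already forces the transmitter down to $R_{N_T}$, with the exact threshold and optimal strategy deferred to Theorem~\ref{The:Powerful_Jammer} and the appendix. One remark on your closing paragraph: the theorem as stated only asserts the existence of \emph{some} threshold in $(0, J_{Max})$ above which the value is $R_{N_T}$, so the minimality argument you sketch (that the equalizer is the cheapest value-capping strategy) is not needed for the statement itself; it is, however, precisely the observation that justifies identifying your $J_{TH}$ with the threshold of Theorem~\ref{The:Powerful_Jammer}, and your saturation argument for it is sound because the caps $R_{N_T}/R_j$ are increasing in $j$.
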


Theorem~\ref{The:Threshold} states that there exists a jamming threshold ($J_{TH}$) such that if
the jammer's average  power exceeds $J_{TH}$ then the transmitter's optimal mixed-strategy is to
use the lowest rate.

\begin{proof}
Assume the jammer is using a mixed strategy with the pmf given in Figure \ref{Fig:Uniform}
(\emph{semi-uniform})\footnote{ We will refer to this class of pmf/pdf as the \emph{semi-uniform}}
which is not necessarily an optimal mixed strategy. The parameters of this pmf are
\begin{equation}
 \begin{aligned}
    y_0 & = 1 - \frac{2N_T}{N_T+1}\cdot \frac{J_{Ave}}{J_{Max}}\\
    y   & = \frac{2}{N_T+1}\cdot \frac{J_{Ave}}{J_{Max}}\\
 \end{aligned}
\end{equation}
It can be easily verified that the semi-uniform pmf satisfies the average power constraint
\begin{equation*}
  \begin{split}
    & \sum^{N_T}_{j=0} J\cdot \text{Pr} \big[ J \big]\\
    &  =  \sum^{N_T}_{j=0} \left( \frac{j}{N_T} J_{Max} \right) \cdot \text{Pr} \bigg[J = \left(
    \frac{j}{N_T} J_{Max} \right) \bigg]\\
    & = J_{Ave}
  \end{split}
\end{equation*}
We assume the transmitter is using an arbitrary mixed strategy in which rates $R_{N_T}$ (the lowest
rate corresponding to $J_T = J_{Max}$) and $R_i$ (an arbitrary rate corresponding to $J_T = $
$\frac{i}{N_T} J_{Max},$ $ 0\leq i < N_T$) have probabilities $x_{N_T}$ and $x_i$ respectively.
Define $C$ to be the expected payoff for the jammer's semi-uniform mixed strategy against the
transmitter's arbitrary mixed-strategy:
\begin{equation}\label{Eq:C_Payoff}
  \begin{split}
    C   & = C_{-i,N_T} + R_{N_T} x_{N_T} \times 1 + R_i x_i \times \text{Pr} \Big[ J \leq J_T = J_i \Big]\\
	& = C_{-i,N_T} R_{N_T}x_{N_T} + R_i x_i \left( y_0 + iy \right)\\
  \end{split}
\end{equation}
where $C_{-i,N_T}$ is the partial expected payoff resulting from all pure strategies except for the
$i$'th and $N_T$'th strategies. In order to improve his payoff, the transmitter, deviates from his
current strategy to $x'_{N_T} = x_{N_T} + \delta$ and $x'_j = x_j - \delta$ where $\delta >0$.
Defining $C'$ to be the expected payoff for the new strategy, we have
\begin{equation}\label{Eq:C_Prime_Payoff}
  \begin{split}
      C' &=   C_{-i,N_T} + R_{N_T} \left( x_{N_T} + \delta \right)\\
	   & \phantom{=} + R_i \left( x_i - \delta \right) \times \text{Pr} \Big[ J \leq J_T = J_i \Big]\\
	&= C + \delta \Big[ R_{N_T} - R_i \left( y_0 + iy \right) \Big]\\
  \end{split}
\end{equation}
Let $\Delta C $ be the difference in the expected payoff caused by deviating to the new strategy
\begin{equation}\label{Eq:Delta_C_InEq}
  \begin{split}
      \Delta C  & = C' - C\\
	        & = \delta \left( R_{N_T} - 2 R_i \frac{N_T - i}{N_T + 1} \cdot \frac{J_{Ave}}{J_{Max}} \right)\\
  \end{split}
\end{equation}
where $\delta>0$ and $0\leq i < N_T$. We show that there exists a jammer power threshold, denoted
by $J_{TH}$, such that if $J_{Ave} \geq J_{TH}$, then for all $\delta > 0$  and for all $i \in [0,
J_{Max})$, we have
\begin{equation}\label{Eq:Delta_C_Pos}
 \Delta C >0
\end{equation}
Assuming (for now) that $\Delta C > 0$ we can rewrite \eqref{Eq:Delta_C_InEq} as
\begin{equation}\label{Eq:Z}
  \begin{split}
    J_{Ave} &\geq \frac{1}{2}J_{Max} \frac{N_T + 1}{N_T - i} \cdot \left( 1 - \frac{R_{N_T}}{R_i} \right)\\
	               & = Z_i, \qquad 0\leq i < N_T\\
  \end{split}
\end{equation}
where $Z_i$'s, for $ i = 0, \dots ,N_T - 1$, are a set of $N_T$ finite values. Let us define
$J_{TH} = \max Z_i$, then for
\begin{equation}\label{Eq:J_TH_Upper_Bound}
 J_{Ave} \geq J_{TH}
\end{equation}
and for all $\delta > 0$ and $i\in [0, N_T)$ the inequalities in \eqref{Eq:Z} and
\eqref{Eq:Delta_C_Pos} are satisfied.

We showed that for $J_{Ave} \geq J_{TH}$, the transmitter can improve his expected payoff by
dropping probability from any arbitrary rate (except for the lowest rate) and adding this
probability to the lowest rate. We can continue this process until all other probabilities are
added to the lowest rate probability and no further improvement to the expected payoff is possible.
This shows that the low rate is indeed an optimal strategy for the transmitter against the jammer's
semi uniform mixed strategy.

By using the semi-uniform pmf and $J_{Ave} \geq J_{TH}$, the jammer can force the transmitter to
operate at the lowest rate and given that the expected payoff is bounded between the transmitter's
lowest and highest rates, we can conclude that the semi-uniform distribution is indeed an optimal
mixed strategy for the jammer when \eqref{Eq:Jammer_Mixed_Set_2} is the mixed strategy
space\footnote{The $J_{TH}$ given by \eqref{Eq:J_TH_Upper_Bound} is not necessarily the lowest
possible threshold since we have limited jammer's strategies to semi-uniform distributions.
However, it is an upper bound for the lowest $J_{TH}$.}.\end{proof}

It is interesting to note that the packetized transmission model employed here and the
transmitter's lack of knowledge of the actual jammer power level benefits the jammer. In fact, the
jammer uses a power level less than $J_{Max}$ but forces the transmitter to transmit at a rate
corresponding to $J_{Max}$. This is similar to the situation in fading channels where although the
ergodic capacity can be large, the outage capacity is considerably lower.

It is shown in Appendix \ref{App:I} that $Z_i$ in \eqref{Eq:Z} is maximized for $i=0$
. Therefore an upper bound for
$J_{TH}$ is
\begin{equation}\label{Eq:J_TH_Upper_Bound2}
 J_{TH,U} = \frac{1}{2} \frac{ N_T + 1 }{N_T} \left( 1 - \frac{R_{N_T}}{R_0} \right) J_{Max}
\end{equation}

In section \ref{SubSec:Powerful_Jammer} we show that by using an optimal mixed strategy, the jammer
can achieve a lower threshold than \eqref{Eq:J_TH_Upper_Bound2}.

\section{Game Analysis}	\label{Sec:Game_Analysis}

%
%
%

In this section we study the optimal mixed strategies for the jammer and the transmitter. We
provide analytic and computer simulated results and a comparison between power thresholds resulted
from computer simulation and the upper bound derived in section \ref{Sec:4}.

Based on relative values of  $J_{Ave}$ and $J_{TH}$, we study two cases, the \emph{powerful jammer}
where $J_{Ave} \geq J_{TH}$ and the \emph{weak Jammer} where$J_{Ave} < J_{TH}$.

\subsection{Powerful Jammer}\label{SubSec:Powerful_Jammer}

As a result of the Theorem \ref{The:Threshold}, there exists a jamming threshold ($J_{TH}$), such
that if the jammer's average power exceeds $J_{TH}$, then the transmitter's optimal mixed strategy
(or more accurately, the optimal pure strategy in this case) is to use the lowest rate. We
formulate this fact in the following theorem.

\begin{theorem}\label{The:Powerful_Jammer}
There exists a threshold $J_{TH}$ such that if $J_{Ave} \geq J_{TH}$, the expected payoff of the
game is
\begin{equation*}
  \begin{split}
  C\Big( J_{Ave} \Big) & =  R_{N_T}=\frac{1}{2} \log \left( 1 + \frac{P_T}{N+J_{Max}} \right)
  \end{split}
\end{equation*}
The value of $J_{TH}$ is given by
\begin{equation}\label{Eq:Power_Threshold}
 J_{TH} =  \left( 1 - \frac{1}{N_T}\alpha^{-1} R_{N_T} \right) J_{Max}
\end{equation}
where $R_i$ is defined in \eqref{Eq:Rate} and
\begin{equation}
  \alpha^{-1} = \sum^{N_T - 1}_{i = 0} \left( R_i \right) ^{-1}
\end{equation}
\end{theorem}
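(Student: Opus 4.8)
The plan is to sharpen Theorem~\ref{The:Threshold} by computing the \emph{exact} optimal threshold: the least average power with which the jammer can pin the transmitter to the lowest rate $R_{N_T}$. The starting point is the triangular structure of $C$ in \eqref{Eq:Game_Matrix}. Against a jammer mixed strategy $\boldsymbol{y}$, the transmitter's pure strategy $i$ (rate $R_i$) earns expected payoff $R_i\sum_{j=0}^{i} y_j$, since by \eqref{Eq:Payoff} the packet is recovered precisely when $J \leq J_i$. Writing $F_i = \sum_{j=0}^{i} y_j$ for the jammer's cumulative distribution, the transmitter is held to $R_{N_T}$ exactly when $R_i F_i \leq R_{N_T}$ for every $i < N_T$, whereas the pure strategy $i = N_T$ already guarantees the transmitter $R_{N_T}$ for every $\boldsymbol{y}$. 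Thus the jammer seeks $\boldsymbol{y}$ satisfying $F_i \leq R_{N_T}/R_i$ for all $i<N_T$ at minimum average power, a small linear program.

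The second step rewrites the power budget through the CDF. With $J_j = (j/N_T)J_{Max}$ and the tail-sum identity $\sum_{j} j\,y_j = \sum_{k=0}^{N_T-1}(1-F_k)$, one obtains $J_{Ave} = J_{Max}\big(1 - \frac{1}{N_T}\sum_{k=0}^{N_T-1} F_k\big)$. Minimizing the average power is therefore equivalent to \emph{maximizing} $\sum_{k=0}^{N_T-1} F_k$ subject to the forcing constraints, and the natural candidate is to saturate each constraint, $F_k = R_{N_T}/R_k$.

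The step I expect to require the most care is verifying that this saturated choice is a legitimate distribution function. Because $R_i$ is strictly decreasing in $i$ (a larger assumed jamming power yields a lower rate in \eqref{Eq:Rate}), the upper bounds $R_{N_T}/R_k$ are themselves non-decreasing in $k$ and satisfy $R_{N_T}/R_{N_T-1} \leq 1$; hence $F_0 \leq F_1 \leq \dots \leq F_{N_T-1} \leq 1 = F_{N_T}$ is monotone and valid, so all the bounds are attainable simultaneously and the greedy choice indeed solves the linear program. Substituting yields $\sum_{k=0}^{N_T-1} F_k = R_{N_T}\sum_{k=0}^{N_T-1} R_k^{-1} = \alpha^{-1}R_{N_T}$, so the minimal feasible average power is $J_{TH} = \big(1 - \frac{1}{N_T}\alpha^{-1}R_{N_T}\big)J_{Max}$, matching \eqref{Eq:Power_Threshold}.

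Finally I would close the value argument from both sides. When $J_{Ave} \geq J_{TH}$ the jammer can afford the optimal strategy $\boldsymbol{y}^*$ within $\mathbb{Y}_{\text{LE}}$, capping the expected payoff at $R_{N_T}$; conversely the transmitter always secures at least $R_{N_T}$ by playing the pure lowest-rate strategy $\boldsymbol{x}^*$ of Theorem~\ref{The:Threshold}. The two bounds coincide, so the game value equals $R_{N_T}$ and $\boldsymbol{x}^*$ is optimal, as claimed.
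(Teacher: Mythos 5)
Your proof is correct, but it reaches \eqref{Eq:Power_Threshold} by a genuinely different route than the paper. The paper treats Theorem~\ref{The:Powerful_Jammer} as a corollary of the weak-jammer analysis: in Section~\ref{SubSec:Weak_Jammer} the threshold is obtained by invoking the unproven Lemma~\ref{Lem:Semi_Uniform_Approx} (equivalence of the semi-uniform pmf and the optimal jamming pmf against the strategy \eqref{Eq:Trans_Optimal_Weak_Jammer}) and solving $\boldsymbol{x}^{*^T}_m C \boldsymbol{y}_{\text{SU}} = R_{m+1}$ for $J_{Ave}$, while Appendix~\ref{App:II} makes this rigorous by guess-and-verify: it writes down candidate strategies $\widehat{\boldsymbol{x}}$ (inverse-rate weights) and $\widehat{\boldsymbol{y}}$ (differences of inverse rates), computes both sides of a sandwich $C(\widehat{\boldsymbol{x}},\boldsymbol{y}^*) \leq C(\boldsymbol{x}^*,\boldsymbol{y}^*) \leq C(\boldsymbol{x}^*,\widehat{\boldsymbol{y}})$, and reads off the threshold as the average power of $\widehat{\boldsymbol{y}}$. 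You instead work in CDF coordinates: the forcing conditions $R_iF_i \leq R_{N_T}$, the tail-sum identity $J_{Ave} = J_{Max}\bigl(1 - \frac{1}{N_T}\sum_{k<N_T}F_k\bigr)$, and greedy saturation of the bounds $F_k \leq R_{N_T}/R_k$ (feasible because these bounds are monotone in $k$) turn the threshold into the optimal value of an explicit linear program. The two constructions meet in the same object --- your saturated CDF is exactly the paper's $\widehat{\boldsymbol{y}}$ with $m+1 = N_T$, since $F_k = R_{N_T}/R_k$ gives $y_0 = R_{N_T}/R_0$ and $y_j = R_{N_T}(R_j^{-1}-R_{j-1}^{-1})$ --- but your derivation buys two things the paper's does not: it is self-contained (no appeal to Lemma~\ref{Lem:Semi_Uniform_Approx} or to the numerically verified transmitter strategy, and the transmitter-side bound needs only the trivial pure strategy $i=N_T$), and it proves \emph{minimality}: \eqref{Eq:Power_Threshold} is the least average power with which any jammer strategy on this power grid can pin the value to $R_{N_T}$, so below $J_{TH}$ the value strictly exceeds $R_{N_T}$. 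What the paper's heavier machinery buys in exchange is the full weak-jammer picture: the thresholds $J_{Ave,m}$ and the optimal transmitter mixed strategies for every $m < N_T - 1$, of which Theorem~\ref{The:Powerful_Jammer} is only the endpoint case.
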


In other words, if the average jamming power exceeds $J_{TH}$ given in \eqref{Eq:Power_Threshold},
by randomizing his strategy, the jammer forces the transmitter to operate at his lowest rate as if
the jammer was using $J_{Max}$ all the time (Barrage noise jammer). If we define the effective
jamming power, $J_{\text{Eff}}$, to be the jamming power a Barrage noise jammer needs to force the
transmitter to operate at the same rate ($R_{N_T}$ in this case) then, for the powerful jammer the
effective jamming power becomes
\begin{equation}
 J_{\text{Eff}} = J_{Max}
\end{equation}

Typical optimal mixed strategies for the transmitter and the jammer in a powerful jammer case are
given in Figure~\ref{Fig:Powerful_Jammer}. Proof of Theorem \ref{The:Powerful_Jammer} is similar to
the proof of Theorem \ref{The:Threshold}. Details of deriving relation \eqref{Eq:Power_Threshold}
are given in Section \ref{SubSec:Weak_Jammer}.

Unfortunately, jammer's optimal mixed strategy cannot be formulated in a closed form relation and
the optimal distribution has to be calculated numerically. As we showed in section
\ref{SubSec:Power_Threshold}, the simple semi-uniform pmf, shown in Figure~\ref{Fig:Uniform}, could
be used to derive an upper bound for the jamming threshold and as an approximation to the jammers
optimal mixed strategy (see Figure \ref{Fig:Powerful_Jammer} (right)). The price paid by deviating
from the optimal mixed strategy to the simple semi-uniform distribution is that the jammer has to
use more average power to force the transmitter to operate at the lowest rate. A comparison between
the jammer's average power threshold given in \eqref{Eq:Power_Threshold} and the upper derived in
\eqref{Eq:J_TH_Upper_Bound2} is given in Figure~\ref{Fig:Comparison}.

\begin{figure}
 \centering
 \includegraphics[width = 3in]{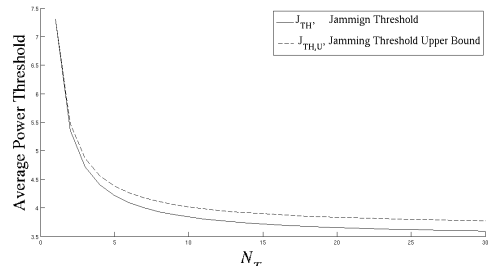}
 \caption{Comparison between the average power threshold and its upper bound }
 \label{Fig:Comparison}
\end{figure}

\subsection{Weak Jammer}\label{SubSec:Weak_Jammer}

\begin{figure*}
 \centering
 \includegraphics[width = 6in]{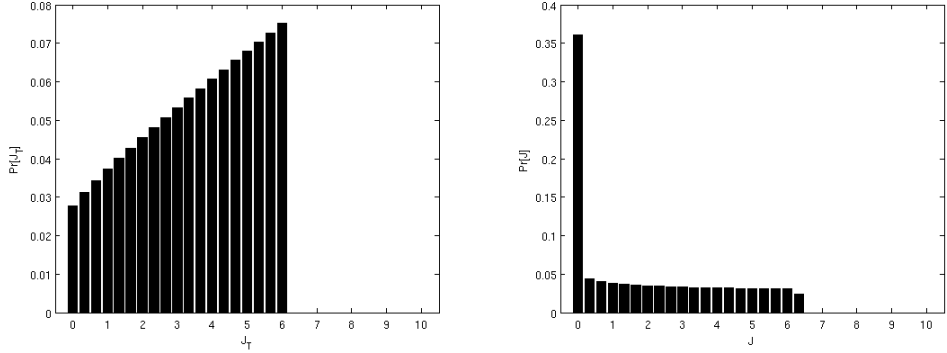}
 \caption{Typical optimal mixed-strategies for the transmitter (left) and the jammer (right) for
 $J_{Ave} < J_{TH} $} \label{Fig:Weak_Jammer}
\end{figure*}

A weak jammer has an average jamming power less than the threshold, $J_{Ave} < J_{TH}$. Typical
optimal mixed strategies for the weak jammer case are given in Figure \ref{Fig:Weak_Jammer}.

In this case the expected payoff, $C\Big( J_{Ave} \Big) \in \big( R_{N_T}, R_0 \ \big]$. Although a
useful closed form relation between the expected payoff and the jammer's average power where
$J_{Ave} \in \big[ 0,$ $ J_{TH} \big)$ cannot be derived, for specific values of the average
jamming power the relation reduces to a simple form. For these specific values, the expected payoff
of the game, $C\Big( J_{Ave} \Big)$, corresponds to one of the transmitter's rates $R_i$, $i =
0,\dots,N_T-1$. We present this fact in the following theorem without providing the full proof. 

\newtheorem{Conjecture}{Conjecture}
\begin{theorem}\label{The:Weak_Jammer}
 Assuming $ J_{Ave} < J_{TH}$
\begin{enumerate}
 \item
  The expected payoff of the game is
  \begin{equation}
    \begin{split}
      C\Big( J_{Ave} \Big) = & R_{m+1}\\
			  =& \frac{1}{2} \log \left ( 1 + \frac{P_T}{N + \frac{m + 1}{N_T}J_{Max}} \right)\\
    \end{split}
  \end{equation}
  where $m$ is the solution of
  \begin{equation}\label{Eq:J_AVE_M}
    J_{Ave} = \left( m + 1 - \alpha^{-1}R_{m+1} \right) \frac{J_{Max}}{N_T}
  \end{equation}

\item
  The transmitter's optimal mixed strategy is
  \begin{equation*}
    \boldsymbol{x}^{*^T}_m = \begin{bmatrix} x_0 & x_1 & \dots & x_m & 0 & \dots & 0
    \end{bmatrix}_{1\times(N_T+1)}
  \end{equation*}
where
\begin{equation}\label{Eq:Trans_Optimal_Weak_Jammer}
  x_i =\text{Pr} \left[ J_T = \left( \frac{i}{N_T} \right) J_{Max} \right]= \alpha_m R_i^{-1},\quad 0\le i\leq m
  \end{equation}
 and
  \begin{equation}
    \alpha^{-1}_m = \sum^{m}_{i = 0} \left( R_i \right) ^{-1}
  \end{equation}
\end{enumerate}
\end{theorem}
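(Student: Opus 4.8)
The plan is to verify that the pair consisting of the stated transmitter strategy $\boldsymbol{x}^*_m$ and a suitably constructed jammer strategy $\boldsymbol{y}^*$ forms a saddle point of the constrained game, so that the common value $\boldsymbol{x}_m^{*T} C \boldsymbol{y}^* = R_{m+1}$ is simultaneously the most the transmitter can guarantee and the least the jammer can concede. Recalling from \eqref{Eq:Game_Matrix} that the $(i,j)$ entry of $C$ equals $R_i$ when $i \ge j$ and $0$ otherwise, the expected payoff of transmitter pure strategy $i$ against a jammer mixed strategy $\boldsymbol{y}$ is $R_i F(i)$, where $F(i) = \sum_{j \le i} y_j$ is the jammer's cumulative distribution; dually, the payoff of $\boldsymbol{x}^*_m$ against jammer pure strategy $j$ is $\sum_{i \ge j} x_i R_i$. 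I would exploit both expressions in turn.

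For the upper bound (the jammer's side), I would construct the equalizing strategy $\boldsymbol{y}^*$ by prescribing its CDF as $F(i) = R_{m+1}/R_i$ for $0 \le i \le m$ and $F(i) = 1$ for $i \ge m+1$, recovering $y^*_j = F(j) - F(j-1)$. Because $R_i$ is strictly decreasing in $i$, each increment is nonnegative, so $\boldsymbol{y}^*$ is a genuine distribution supported on indices $0,\dots,m+1$. With this CDF every transmitter row $i \le m+1$ yields exactly $R_i F(i) = R_{m+1}$, while rows $i > m+1$ yield $R_i \cdot 1 < R_{m+1}$; hence no transmitter strategy beats $R_{m+1}$ against $\boldsymbol{y}^*$. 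The remaining task here is to check feasibility, i.e. that $\boldsymbol{y}^* \in \mathbb{Y}_{\text{LE}}$: computing $\sum_j J_j y^*_j = \frac{J_{Max}}{N_T}\sum_{k=0}^{m}\big(1 - F(k)\big)$ via the tail-sum identity and substituting $F(k) = R_{m+1}/R_k$ reproduces precisely the defining relation \eqref{Eq:J_AVE_M} (with $\alpha_m^{-1} = \sum_{i=0}^{m} R_i^{-1}$), which is what pins down $m$ as the value for which this construction is exactly power-feasible.

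For the lower bound (the transmitter's side), I would show that $\boldsymbol{x}^*_m$ guarantees at least $R_{m+1}$ against every feasible jammer. Since $\sum_i x_i = \alpha_m \sum_{i=0}^m R_i^{-1} = 1$, the vector is a valid strategy, and a short computation gives that its payoff against jammer column $j$ equals $\alpha_m\,(m+1-j)^+$. The crux is then a minorization argument: because $(m+1-j)^+ \ge m+1-j$ for every index $j \ge 0$, any feasible $\boldsymbol{y}$ satisfies
\begin{equation*}
\begin{aligned}
  \boldsymbol{x}_m^{*T} C \boldsymbol{y} &= \alpha_m \sum_j y_j (m+1-j)^+ \\
  &\ge \alpha_m\Big( m+1 - \sum_j j\,y_j \Big) \\
  &\ge \alpha_m\Big( m+1 - \tfrac{N_T}{J_{Max}} J_{Ave} \Big),
\end{aligned}
\end{equation*}
where the last step uses the average-power constraint $\sum_j \frac{j}{N_T} J_{Max}\, y_j \le J_{Ave}$. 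Substituting \eqref{Eq:J_AVE_M} collapses the right-hand side to $\alpha_m \cdot \alpha_m^{-1} R_{m+1} = R_{m+1}$, establishing the bound. Combining the two bounds yields $C(J_{Ave}) = R_{m+1}$ and confirms $\boldsymbol{x}^*_m$ and $\boldsymbol{y}^*$ as optimal.

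I expect the lower-bound minorization to be the main obstacle, precisely because the payoff profile $(m+1-j)^+$ is flat (zero) for $j > m$: a naive equalizing argument breaks down there, and one must instead observe that replacing the clipped profile by its linear extension $m+1-j$ can only decrease the bound, so that the single scalar power constraint suffices to control the minimum. Two secondary points need care: first, that $m$ is a genuine integer in $\{0,\dots,N_T-1\}$, which holds only for the special values of $J_{Ave}$ for which \eqref{Eq:J_AVE_M} is solvable in integers (the theorem is explicitly restricted to such values); and second, the bookkeeping reconciling the factor written as $\alpha^{-1}$ in \eqref{Eq:J_AVE_M} with $\alpha_m^{-1}=\sum_{i=0}^{m}R_i^{-1}$ from the strategy definition, the latter being the quantity that actually makes the two bounds meet.
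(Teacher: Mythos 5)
Your proposal is correct, and its skeleton coincides with the paper's own argument in Appendix~\ref{App:II}: the paper's equalizing jammer strategy $\widehat{\boldsymbol{y}}$ in \eqref{Eq:y_Hat}, with $y_0=R_0^{-1}R_{m+1}$ and $y_j=\bigl(R_j^{-1}-R_{j-1}^{-1}\bigr)R_{m+1}$, is exactly your CDF prescription $F(i)=R_{m+1}/R_i$, its transmitter strategy $\widehat{\boldsymbol{x}}$ in \eqref{Eq:x_Hat_Vector} is your $\boldsymbol{x}^*_m$ (modulo a typo there: the prefactor should be $\alpha_m$, not $\alpha_m^{-1}$), and both proofs sandwich the game value between the two one-sided guarantees, with the power computation for $\widehat{\boldsymbol{y}}$ reproducing \eqref{Eq:J_AVE_M} just as in your tail-sum calculation. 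Where you genuinely depart from the paper is the lower bound. The paper evaluates $C(\widehat{\boldsymbol{x}},\boldsymbol{y})$ only for jammer strategies supported on $\{0,\dots,m+1\}$ and lying in the equality-constrained space $\mathbb{Y}_{\text{E}}$ (see \eqref{Eq:Mixed_Strategy_Vectors} and \eqref{Eq:Average_Power_Constraint}), justifying the support restriction by an appeal to the dominance Lemma~\ref{Lem:N_T} and to the unproven Lemma~\ref{Lem:Semi_Uniform_Approx}, deferring the rest to an induction it does not carry out, and it reaches the value $R_{m+1}$ by passing to the $(m+2)$-point strategy (the step $m\rightarrow m+1$ applied to \eqref{Eq:Payoff_for_Trans_Optimal}). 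Your minorization $(m+1-j)^+\geq m+1-j$ removes all of this machinery: it bounds $\boldsymbol{x}_m^{*T}C\boldsymbol{y}$ from below for \emph{every} $\boldsymbol{y}\in\mathbb{Y}_{\text{LE}}$, whatever its support, using only the scalar inequality constraint, and it certifies directly that the strategy named in the theorem, $\boldsymbol{x}^*_m$, guarantees $R_{m+1}$. So your write-up is self-contained where the paper's is admittedly partial (the theorem is stated ``without providing the full proof''), at essentially no extra cost. Your two caveats are also the right ones: \eqref{Eq:J_AVE_M} only pins down an integer $m$ for the discrete set of $J_{Ave}$ values the theorem implicitly addresses, and the $\alpha^{-1}$ appearing in \eqref{Eq:J_AVE_M} must indeed be read as $\alpha_m^{-1}=\sum_{i=0}^{m}R_i^{-1}$, consistent with \eqref{Eq:y_Hat_Average_Power}.
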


The optimal mixed strategies for a typical zero-sum two-player game could be calculated by
\emph{linear programming}. Our game model differs from a typical zero-sum game however, linear
programming could still be used to calculate the optimal mixed strategies by making the proper
modifications \cite{Owen} and even though we do not provide the full proof for
the transmitter's optimal mixed-strategy, the consistency of \eqref{Eq:Trans_Optimal_Weak_Jammer} can
be verified by computer simulation. Numerical calculations verify that results achieved by using
\eqref{Eq:Trans_Optimal_Weak_Jammer} as the transmitter's optimal mixed strategies are accurate to
the order of $10^{-15}$.

In order to prove \eqref{Eq:J_AVE_M}, we first introduce the following lemma without a proof.

\begin{lem}\label{Lem:Semi_Uniform_Approx}
 The semi-uniform distribution and the jammer's optimal mixed strategy (see
 Figure \ref{Fig:Weak_Jammer} (left)) result in the same expected payoff against the transmitter
 mixed strategy given in \eqref{Eq:Trans_Optimal_Weak_Jammer}, if they have the same
 support and average jamming power.
\end{lem}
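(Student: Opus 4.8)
The plan is to compute the expected payoff $C(\boldsymbol{x}_m^*,\boldsymbol{y}) = \boldsymbol{x}_m^{*T} C \boldsymbol{y}$ explicitly and to show that, once the transmitter plays the strategy \eqref{Eq:Trans_Optimal_Weak_Jammer}, the result collapses to an \emph{affine function of the jammer's average power alone}. Once that is established, any two jammer strategies sharing the same support and the same $J_{Ave}$ must yield identical payoffs, which is exactly the assertion.

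First I would expand the bilinear form using the lower-triangular structure of $C$ established in Lemma \ref{Lem:N_T}: the entry in row $i$, column $j$ equals $R_i$ when $j\le i$ and $0$ otherwise, so
\begin{equation*}
 C(\boldsymbol{x}_m^*,\boldsymbol{y}) = \sum_{i=0}^{m} x_i R_i \sum_{j=0}^{i} y_j .
\end{equation*}
The crucial step is then to substitute $x_i = \alpha_m R_i^{-1}$ from \eqref{Eq:Trans_Optimal_Weak_Jammer}: the factors $R_i$ cancel exactly, leaving $C(\boldsymbol{x}_m^*,\boldsymbol{y}) = \alpha_m \sum_{i=0}^m \sum_{j=0}^i y_j$. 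Interchanging the order of summation, the coefficient of $y_j$ becomes the number of indices $i\in\{0,\dots,m\}$ with $i\ge j$, hence
\begin{equation*}
 C(\boldsymbol{x}_m^*,\boldsymbol{y}) = \alpha_m \sum_{j=0}^{N_T} \max(m+1-j,\,0)\, y_j .
\end{equation*}

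Next I would observe that the hinge coefficient $\max(m+1-j,0)$ agrees with the affine function $\ell(j) = m+1-j$ on the whole range $\{0,1,\dots,m+1\}$ — in particular both vanish at $j=m+1$ — and differ only for $j>m+1$. Since in the weak-jammer regime both the semi-uniform distribution and the optimal distribution are supported inside $\{0,\dots,m+1\}$ (cf. Figure \ref{Fig:Weak_Jammer}), on this common support the coefficient is exactly $\ell(j)$, so
\begin{equation*}
 C(\boldsymbol{x}_m^*,\boldsymbol{y}) = \alpha_m\!\left[(m+1)\sum_{j} y_j - \sum_{j} j\,y_j\right] = \alpha_m\!\left[(m+1) - \frac{N_T}{J_{Max}}\,J_{Ave}\right],
\end{equation*}
where I used $\sum_j y_j = 1$ and the definition of the average power $J_{Ave} = \frac{J_{Max}}{N_T}\sum_j j\,y_j$. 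The right-hand side depends on $\boldsymbol{y}$ only through $J_{Ave}$ and the common value of $m$ fixed by the support; hence two jammer strategies with equal support and equal average power produce the same expected payoff, which is the claim. As a byproduct, setting this value equal to $R_{m+1}$ recovers \eqref{Eq:J_AVE_M}.

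I expect the main obstacle to be the third step, namely justifying that the hinge coefficient may be replaced by the single affine function $\ell$. This replacement is precisely what fails if the support reaches index $m+2$ or beyond, where $\max(m+1-j,0)=0$ but $\ell(j)<0$; there the payoff would be a genuinely piecewise-affine functional of $\boldsymbol{y}$ and would no longer be determined by the mean alone. Thus the hypothesis that the two distributions share the \emph{same} support — and that for a weak jammer this support lies within $\{0,\dots,m+1\}$ — is exactly what reduces the expectation of the piecewise-affine coefficient to an affine function of $J_{Ave}$; everything else is the routine cancellation carried out above.
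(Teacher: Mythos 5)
Your proof is correct and is essentially the paper's own argument: although the lemma is stated ``without a proof'' in the main text, Appendix~\ref{App:II} (the chain from \eqref{Eq:Expected_Payoff_for_x_Hat} to \eqref{Eq:Expected_Payoff_for_x_Hat_Final}) carries out exactly your computation --- expand $\widehat{\boldsymbol{x}}^{T} C \boldsymbol{y}$, cancel the $R_i$'s against $x_i=\alpha_m R_i^{-1}$, count how many times each $y_j$ appears, and invoke the power constraint to get the affine value $\alpha_m\left(m+1-N_T\,J_{Ave}/J_{Max}\right)$, which depends on $\boldsymbol{y}$ only through its support and average power. Your packaging via the coefficient $\max(m+1-j,0)$ and the explicit remark that the reduction fails if the support reaches index $m+2$ is a slightly cleaner presentation of the same calculation, not a different route.
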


The outline of the proof for \eqref{Eq:J_AVE_M} will be given next.

\begin{proof}[(Outline)]
Assume $J_{Ave}$ is such that the transmitter is using $(m+1)$ of his pure strategies, i.e.,
\begin{equation*}
 \boldsymbol{x}^{*^T}_m = \begin{bmatrix} x_0 & x_1 & \dots & x_m & 0 & \dots & 0
 \end{bmatrix}_{1\times(N_T+1)}
\end{equation*}
where $\boldsymbol{x}^{*^T}_m$ is given in \eqref{Eq:Trans_Optimal_Weak_Jammer}. Using Lemma
\ref{Lem:N_T}, the jammer only needs to use the strategies  $J_j$ where $j = 0,\dots,$ $(m+1)$ and
the expected payoff of the game would be at least $R_{m+1}$ (otherwise the jammer had to use more
strategies). Lemma \ref{Lem:Semi_Uniform_Approx} suggests that the following semi-uniform
distribution which has the same support and average power as the jammer's optimal mixed strategy
could be used instead to compute the expected payoff of the game.
\begin{equation*}
 \boldsymbol{y}^T_{\text{SU}} = \begin{bmatrix} y_0 & y_1 & \dots & y_{m+1} & 0 & \dots & 0
 \end{bmatrix}_{1\times(N_T+1)}
\end{equation*}
\begin{equation}\label{Eq:Semi_Uniform_M}
 y_j =
  \begin{cases}
    1 - \frac{2N_T}{(m+2)}\cdot\frac{J_{Ave}}{J_{Max}} &\quad j = 0\\
    \frac{2N_T}{(m+1)(m+2)}\cdot\frac{J_{Ave}}{J_{Max}} &\quad j = 1, \dots,m + 1
  \end{cases}
\end{equation}
If we let the expected payoff of the transmitter be exactly $R_{m+1}$, then
\begin{equation}\label{Eq:R_M_Plus_One}
  \boldsymbol{x}^{*^T}_m C \boldsymbol{y}_{\text{SU}} = R_{m+1}
\end{equation}
Substituting \eqref{Eq:Trans_Optimal_Weak_Jammer} and \eqref{Eq:Semi_Uniform_M} in
\eqref{Eq:R_M_Plus_One} and solving for $J_{Ave}$ results in \eqref{Eq:J_AVE_M}.

Finally, letting $R_m = R_{N_T}$ or equivalently letting $m = (N_T-1)$ in \eqref{Eq:J_AVE_M} we
obtain the desired relation in \eqref{Eq:Power_Threshold}.
\end{proof}

For a weak jammer, the effective jamming power, $J_{\text{Eff}}$ is
\begin{equation}
 J_{\text{Eff}} = \left( \frac{m+1}{N_T} \right) J_{Max}
\end{equation}
If we define the effectiveness factor $E$ to be the ratio of the effective jamming power to the
actual average jamming power, we have
\begin{equation}
  \begin{aligned}
    E^{-1} & = \frac{J_{Ave}}{J_{\text{Eff}}}\\
	   & = \frac{\left( m + 1 - \alpha^{-1}_m R_{m+1} \right) \cdot \left( \frac{J_{Max}}{N_T}
\right)}{\left( \frac{m+1}{N_T} \right) J_{Max}}\\
	   & = 1 - \frac{1}{m+1} \alpha^{-1}_m R_{m+1} < 1\\
  \end{aligned}
\end{equation}
Similar to the case of the powerful jammer, the weak jammer can cause more damage to the
communication link than a Barrage noise jammer with an average power $J_{Ave}$.

\section{Continuous Case}\label{Sec:Continuous}

In this section we study the case where the jammer and the transmitter have infinite pure
strategies. In this case, instead of finite number of pure strategies, the transmitter and the
jammer have a continuum of pure strategies that could be represented as points in intervals $ R \in
\Big[ R(J_{Max}),  R(0) \Big]$ and $J \in \big[ 0, J_{Max} \big] $ respectively.

By letting $N_T \rightarrow \infty$ in \eqref{Eq:Power_Threshold}, we can find the jamming power
threshold for the continuous case to be
\begin{equation}
 \begin{aligned}
    J_{TH,\text{Lim}} = & \lim_{N_T\rightarrow \infty} J_{TH}\\
		      = &  J_{Max} - \frac{1}{2} \log \left( 1 + \frac{P_T}{N + J_{Max}} \right) \\
			& \times \int^{J_{Max}}_0 \left[ \frac{1}{2} \log \left( 1 + \frac{P_T}{N + J} \right) \right]^{-1} \cdot dJ\\
 \end{aligned}
\end{equation}

Similar to the discrete case, we can use a continuous semi-uniform distribution to approximate the
jammer's optimal mixed strategy and find an upper bound for $J_{TH,\text{Lim}}$.

\begin{equation}
 J_{TH,\text{Lim,UB}} = \frac{1}{2} \left[ 1 - \frac{R(J_{Max})}{R(0)} \right]J_{Max}
\end{equation}

%
%

\section{Conclusions}	\label{Sec:Conclusion}

We formulated the interaction between rate-adaptive communicating nodes and a smart power-limited jammer in a game-theoretic context. We show that packetization and adaptivity advantage the jammer. While, previous stationary information-theoretic capacity results correspond to a pure Nash-Equilibrium, packetized adaptive communication leads to lower game values. We show the existence of a mixed Nash Equilibrium and how to compute it. More importantly and surprisingly, we show the existence of a threshold on the average power of the jammer, above which the transmitter is forced to use a rate that corresponds to the maximum power of the jammer (and not the average power). We finally show how the optimal strategies can be computed and also derive a very simple (semi-uniform) jamming strategies that forces the transmitter to operate at the lowest rate (as if the jammer was continuously using its maximum power and not its average power).

%


\begin{thebibliography}{10}

\bibitem{Altman07}
E.~Altman, K.~Avrachenkov, and A.~Garnaev.
\newblock A jamming game in wireless networks with transmission cost.
\newblock {\em Proc. The International Conference on Network Control and
  Optimization (NET-COOP 2007)}, 4465:1--12, 2007.

\bibitem{AwerbuchRS08}
B.~Awerbuch, A.~Richa, and C.~Scheideler.
\newblock A jamming-resistant mac protocol for single-hop wireless networks.
\newblock In {\em ACM PODC}, 2008.

\bibitem{Basar99}
T.~Basar and G.~J. Olsder.
\newblock {\em Dynamic Noncooperative Game Theory}.
\newblock Academic Press, 1999.

\bibitem{BayraktarogluKLNRT08}
E.~Bayraktaroglu, C.~King, X.~Liu, G.~Noubir, R.~Rajaraman, and B.~Thapa.
\newblock On the performance of ieee 802.11 under jamming.
\newblock In {\em Proceedings of IEEE INFOCOM}, 2008.

\bibitem{BenderFHKL05}
M.~A. Bender, M.~Farach-Colton, S.~He, B.~C. Kuszmaul, and C.~E. Leiserson.
\newblock Adversarial contention resolution for simple channels.
\newblock In {\em SPAA}, 2005.

\bibitem{Bicket05}
J.~Bicket.
\newblock Bit-rate selection in wireless networks.
\newblock {\em MIT Master's Thesis}, 2005.

\bibitem{BroustisKDSL09}
I.~Broustis, K.~Pelechrinis, D.~Syrivelis, S.~V. Krishnamurthy, and
  L.~Tassiulas.
\newblock Fiji: Fighting implicit jamming in 802.11 wlans.
\newblock {\em SecureComm}, 2009.

\bibitem{CampE08}
J.~Camp and E.~Knightly.
\newblock Modulation rate adaptation in urban and vehicular environments:
  Cross-layer implementation and experimental evaluation.
\newblock {\em MobiCom}, 2008.

\bibitem{ChanLNT07}
A.~Chan, X.~Liu, G.~Noubir, and B.~Thapa.
\newblock Control channel jamming: Resilience and identification of traitors.
\newblock In {\em IEEE ISIT}, 2007.

\bibitem{ChiangH07}
J.~Chiang and Y.-C. Hu.
\newblock Cross-layer jamming detection and mitigation in wireless broadcast
  networks.
\newblock In {\em MobiCom}, 2007.

\bibitem{CoverJ06}
T.~M. Cover and J.~A. Thomas.
\newblock {\em Elements of Information Theory}.
\newblock Wiley, 2006.

\bibitem{Koorosh11}
K.~Firouzbakht, G.~Noubir, and M.~Salehi.
\newblock Superposition coding in an adversarial environment.
\newblock {\em 45th Annual Conference on Information Sciences and Systems
  (CISS)}, May 2011.

\bibitem{GilbertRC06}
S.~Gilbert, R.~Guerraoui, and C.~Newport.
\newblock Of malicious motes and suspicious sensors: On the efficiency of
  malicious interference in wireless networks.
\newblock In {\em OPODIS}, 2006.

\bibitem{HollandVB01}
G.~Holland, N.~Vaidya, and V.~Bahl.
\newblock A rate-adaptive mac protocol for multihop wireless networks.
\newblock {\em ACM MOBICOM}, 2001.

\bibitem{JinNT09}
T.~Jin, G.~Noubir, and B.~Thapa.
\newblock Zero pre-shared secret key establishment in the presence of jammers.
\newblock In {\em Proceedings of the tenth ACM international symposium on
  Mobile ad hoc networking and computing, MobiHoc'09}, pages 219--228, New
  York, NY, USA, 2009. ACM.

\bibitem{JuddXP08}
G.~Judd, X.~Wang, and P.~Steenkiste.
\newblock Efficient channel-aware rate adaptation in dynamic environments.
\newblock {\em MobiSys}, 2008.

\bibitem{KimSSD06}
J.~Kim, S.~Kim, S.~Choi, and D.~Qiao.
\newblock Cara: Collision-aware rate adaptation for ieee 802.11 wlans.
\newblock {\em INFOCOM}, 2006.

\bibitem{KBKV06}
C.~Koo, V.~Bhandari, J.~Katz, and N.~Vaidya.
\newblock Reliable broadcast in radio networks: The bounded collision case.
\newblock In {\em ACM PODC}, 2006.

\bibitem{LacageMT04}
M.~Lacage, M.~H. Manshaei, and T.~Turletti.
\newblock Ieee 802.11 rate adaptation: A practical approach.
\newblock {\em ACM MSWiM}, 2004.

\bibitem{LazosLK09}
L.~Lazos, S.~Liu, and M.~Krunz.
\newblock Mitigating control-channel jamming attacks in multi-channel ad hoc
  networks.
\newblock In {\em Proceedings of the second ACM conference on Wireless network
  security}, WiSec '09, pages 169--180, New York, NY, USA, 2009. ACM.

\bibitem{LiKP07}
M.~Li, I.~Koutsopoulos, and R.~Poovendran.
\newblock Optimal jamming attacks and network defense policies in wireless
  sensor networks.
\newblock In {\em INFOCOM}, 2007.

\bibitem{LinN04}
G.~Lin and G.~Noubir.
\newblock On link layer denial of service in data wireless lans.
\newblock {\em Wiley Journal on Wireless Communications and Mobile Computing},
  5, 2004.

\bibitem{LinN05}
G.~Lin and G.~Noubir.
\newblock On link layer denial of service in data wireless lans.
\newblock {\em Wirel. Commun. Mob. Comput.}, 5(3):273--284, 2005.

\bibitem{LiuLK11}
S.~Liu, L.~Lazos, and M.~Krunz.
\newblock Thwarting inside jamming attacks on wireless broadcast
  communications.
\newblock In {\em Proceedings of the fourth ACM conference on Wireless network
  security}, WiSec '11, pages 29--40, New York, NY, USA, 2011. ACM.

\bibitem{LiuPDL10}
Y.~Liu, P.~Ning, H.~Dai, and A.~Liu.
\newblock Randomized differential dsss: jamming-resistant wireless broadcast
  communication.
\newblock In {\em Proceedings of the 29th conference on Information
  communications}, INFOCOM'10, pages 695--703, Piscataway, NJ, USA, 2010. IEEE
  Press.

\bibitem{Nash51}
J.~Nash.
\newblock Non-cooperative games.
\newblock {\em Annuals of Mathematics}, pages 286--295, 1951.

\bibitem{negi+p:jam}
R.~Negi and A.~Perrig.
\newblock Jamming analysis of {MAC} protocols.
\newblock Technical report, Carnegie Mellon University, 2003.

\bibitem{NoubirRST11}
G.~Noubir, R.~Rajaraman, B.~Sheng, and B.~Thapa.
\newblock On the robustness of ieee 802.11 rate adaptation algorithms against
  smart jamming.
\newblock In {\em Proceedings of the fourth ACM conference on Wireless network
  security}, WiSec '11, pages 97--108, New York, NY, USA, 2011. ACM.

\bibitem{Owen}
G.~Owen.
\newblock {\em Game Theory}.
\newblock Academic Press, 1995.

\bibitem{PelechrinisBKG09}
K.~Pelechrinis, I.~Broustis, S.~V. Krishnamurthy, and C.~Gkantsidis.
\newblock Ares: an anti-jamming reinforcement system for 802.11 networks.
\newblock In {\em Proceedings of the 5th international conference on Emerging
  networking experiments and technologies}, CoNEXT '09, pages 181--192, New
  York, NY, USA, 2009. ACM.

\bibitem{Peterson}
R.~K. Peterson, R.~E. Ziemer, and D.~E. Borth.
\newblock {\em Introduction to Spread-Spectrum Communications}.
\newblock Prentice-Hall, 1995.

\bibitem{RahulFDC09}
H.~Rahul, F.~Edalat, D.~Katabi, and C.~Sodini.
\newblock Frequency-aware rate adaptation and mac protocols.
\newblock {\em MobiCom}, 2009.

\bibitem{RamachandranKZG08}
K.~Ramachandran, R.~Kokku, H.~Zhang, and M.~Gruteser.
\newblock Symphony: Synchronous two-phase rate power control in 802.11 wlans.
\newblock {\em MobiSys}, 2008.

\bibitem{RasmussenCC07}
K.~B. Rasmussen, S.~Capkun, and M.~Cagalj.
\newblock Secnav: secure broadcast localization and time synchronization in
  wireless networks.
\newblock In {\em MobiCom}, 2007.

\bibitem{SimonOSL01}
M.~K. Simon, J.~K. Omura, R.~A. Scholtz, and B.~K. Levitt.
\newblock {\em Spread Spectrum Communications Handbook}.
\newblock McGraw-Hill, 2001.

\bibitem{SlaterPRB09}
D.~Slater, P.~Tague, R.~Poovendran, and B.~Matt.
\newblock A coding-theoretic approach for efficient message verification over
  insecure channels.
\newblock In {\em 2nd ACM Conference on Wireless Network Security (WiSec)},
  2009.

\bibitem{StrasserCS09}
M.~Strasser, C.~Popper, and S.~Capkun.
\newblock Efficient uncoordinated fhss anti-jamming communication.
\newblock In {\em MobiHoc}, 2009.

\bibitem{StrasserCSM08}
M.~Strasser, C.~Popper, S.~Capkun, and M.~Cagalj.
\newblock Jamming-resistant key establishment using uncoordinated frequency
  hopping.
\newblock In {\em ISSP}, 2008.

\bibitem{TagueLP07}
P.~Tague, M.~Li, and R.~Poovendran.
\newblock Probabilistic mitigation of control channel jamming via random key
  distribution.
\newblock In {\em Proceedings of International Symposium on Personal, Indoor
  and Mobile Radio Communications}, 2007.

\bibitem{TagueSNP08}
P.~Tague, D.~Slater, G.~Noubir, and R.~Poovendran.
\newblock Linear programming models for jamming attacks on network traffic
  flows.
\newblock In {\em WiOpt}, 2008.

\bibitem{VutukuruHK09}
M.~Vutukuru, H.~Balakrishnan, and K.~Jamieson.
\newblock Cross-layer wireless bit rate adaptation.
\newblock {\em SIGCOMM}, 2009.

\bibitem{Giannakis08}
T.~Wang and G.~B. Giannakis.
\newblock Mutual information jammer-relay games.
\newblock {\em IEEE Transactions on Information Forensics and Security},
  3(2):290--303, June 2008.

\bibitem{WilhelmMSL11}
M.~Wilhelm, I.~Martinovic, J.~B. Schmitt, and V.~Lenders.
\newblock Short paper: reactive jamming in wireless networks: how realistic is
  the threat?
\newblock In {\em Proceedings of the fourth ACM conference on Wireless network
  security}, WiSec '11, pages 47--52, New York, NY, USA, 2011. ACM.

\bibitem{WongHSV06}
S.~H. Wong, H.~Yang, S.~Lu, and V.~Bharghavan.
\newblock Robust rate adaptation for 802.11 wireless networks.
\newblock {\em MobiCom}, 2006.

\bibitem{XuMTZ06}
W.~Xu, K.~Ma, W.~Trappe, and Y.~Zhang.
\newblock Jamming sensor networks: attack and defense strategies.
\newblock {\em IEEE Network}, 2006.

\bibitem{XuKWY06}
W.~Xu, K.~Ma, W.~Trappe, and Y.~Zhang.
\newblock Jamming sensor networks: attack and defense strategies.
\newblock {\em IEEE Network}, 20(3):41--47, 2006.

\bibitem{XuTZ07}
W.~Xu, W.~Trappe, and Y.~Zhang.
\newblock Channel surfing: defending wireless sensor networks from
  interference.
\newblock In {\em Proceedings of the 6th international conference on
  Information processing in sensor networks}, IPSN '07, pages 499--508, New
  York, NY, USA, 2007. ACM.

\bibitem{XuTZ08}
W.~Xu, W.~Trappe, and Y.~Zhang.
\newblock Defending wireless sensor networks from radio interference through
  channel adaptation.
\newblock {\em ACM Transactions on Sensor Networks}, 4:18:1--18:34, September
  2008.

\end{thebibliography}

%
%

\appendix

\section{Upper Bound for Jamming \\ Power Threshold}\label{App:I}

In Section \ref{SubSec:Power_Threshold} we showed that the 
\begin{equation}\label{Eq:Zi}
 J_{Ave} \geq Z_i = \frac{1}{2} J_{Max} \frac{N_T + 1}{N_T - i} \left(1 - \frac{R_{N_T}}{R_i} \right)
\end{equation}
where
\begin{equation}
 R_i = \frac{1}{2} \log \left( 1 + \frac{P_T}{N + \left( \frac{i}{N_T} \right) J_{Max}} \right)
\end{equation}
We stated that $Z_i$ given in \eqref{Eq:Zi} is strictly decreasing function of $i$ hence, $\max_i Z_I = Z_0 $ and an upper bound for the average power threshold is
\begin{equation}\label{Eq:JTHU}
 J_{TH,U} = \frac{1}{2} J_{Max} \frac{N_T + 1}{N_T} \left(1 - \frac{R_{N_T}}{R_0} \right)
\end{equation}
\begin{proof}
To prove \eqref{Eq:JTHU} first, we rewrite $Z_i$ as
\begin{equation}\label{Eq:Zi2}
  \begin{aligned}
    Z_i &=  \frac{1}{2} J_{Max} \frac{N_T + 1}{N_T} \frac{1}{1-\frac{i}{N_T}}  \left(1 - \frac{R_{N_T}}{R_i} \right)  			;\quad 0 \leq i < N_T\\
	&=  \left( \frac{1}{2} J_{Max} \frac{N_T + 1}{N_T} \right) \frac{J_{Max}}{J_{Max} - (\frac{i}{N_T} J_{Max})} \left(1 - \frac{R_{N_T}}{R_i} \right)\\
  \end{aligned}
\end{equation}
define $J$ and $R(J)$ as
\begin{equation}\label{Eq:J_and_RJ}
  \begin{aligned}
    & J = \left( \frac{i}{N_T} J_{Max} \right)  & 0 & \leq i < N_T  \\
    & R(J) = \frac{1}{2} \log \left( 1 + \frac{P_T}{N + J} \right) & 0 & \leq J < J_{Max} \\
  \end{aligned}
\end{equation}
substituting \eqref{Eq:J_and_RJ} in \eqref{Eq:Zi2} and we have
\begin{equation}\label{Eq:ZJ}
  \begin{aligned}
    Z(J) &= \left( \frac{1}{2} J_{Max} \frac{N_T + 1}{N_T} \right) \frac{J_{Max}}{J_{Max} - J} \left[1 - 			 	\frac{R(J_{Max})}{R(J)} \right]\\
	 &= a \times \frac{J_{Max}}{J_{Max} - J} \left[1 - \frac{R(J_{Max})}{R(J)} \right]\\
	 &= a\times F(J); \quad \text{where} \quad a > 0\ \text{and}\ 0\leq J < J_{Max}\\ 
  \end{aligned}
\end{equation}
If $F(J)$ in \eqref{Eq:ZJ} were a decreasing function of $J$ then, $Z_i$ and $Z(J)$ would also be decreasing functions of $i$ and $J$ respectively. Now let
\begin{equation}\label{Eq:FJ}
  F(J) = f(J)g(J) \quad \text{where} \quad 
  \begin{aligned}
    &f(J) = \frac{J_{Max}}{J_{Max} - J}\\
    &g(J) = 1 - \frac{R(J_{Max})}{R(J)}\\
  \end{aligned}
\end{equation}
For decreasing $F(J)$ we have
\begin{equation}\label{Eq:Decreasing_FJ}
 \begin{aligned}
    \frac{\partial}{\partial J}F & = g \frac{\partial}{\partial J}f + f \frac{\partial}{\partial J}g < 0\\
				 & f,g>0 \quad \text{for} \quad 0\leq J < J_{Max}  \Rightarrow \frac{\frac{\partial}{\partial J}f}{f} < - \frac{\frac{\partial}{\partial J}g}{g}\\
 \end{aligned}
\end{equation}
but from \eqref{Eq:FJ} we have 
\begin{equation}\label{Eq:dg_and_df}
  \begin{aligned}
 - \frac{\frac{\partial}{\partial J}g}{g} =& 
    \frac{1}{N+J} \times \left( \frac{x}{1 + x} \right) \left( \frac{\log (1 + x_m)}{\log (1 +x)} \right) \\
      & \quad \times \left( \frac{1}{\log (1 + x) - \log (1 + x_m) } \right) \\
  \frac{\frac{\partial}{\partial J}f}{f} =& \frac{P_T^{-1} x_m x}{x - x_m}\\
  \end{aligned}
\end{equation}
where
\begin{equation}\label{Eq:x}
 x = \frac{P_T}{N + J} \quad \text{and} \quad x_m = \frac{P_T}{N + J_{Max}} 
\end{equation}
obviously
\begin{equation*}
 0 < x_m < x
\end{equation*}
if we plug \eqref{Eq:dg_and_df} and \eqref{Eq:x} in \eqref{Eq:Decreasing_FJ} and simplify inequality we have
\begin{equation}
  \begin{aligned}\label{Eq:MathcalZ}
    \mathcal{Z} =& \frac{x_m^{-1}x}{1+x} \frac{x-x_m}{\log(1+x) - \log(1+x_m)} \frac{\log(1+x_m)}{\log(1+x)} \\
		 & > 1 \qquad \text{for} \quad 0 < x_m < x \\
  \end{aligned}
\end{equation}
We need to show that \eqref{Eq:MathcalZ} holds for all $ 0 < x_m < x $ but first, we notice that 
\begin{equation}\label{Eq:MathcalZ_Limit}
  \lim _{x \rightarrow x_m^{+}} \mathcal{Z} \sim \frac{x_m^{-1} \log(1+x_m)}{x^{-1} \log(1+x)} \rightarrow 1^+ \quad \forall \ 0 < x_m < x
\end{equation}
since we have
\begin{equation}
  \frac{\partial}{\partial z} z^{-1} \log(1+z) < 0 \quad \forall \ 0 < z
\end{equation}
where we used the following natural logarithm property
\begin{equation}\label{Eq:Log_Inequality}
  \frac{z}{1+z} < \log(1+z) \leq z \quad \text{for all} \quad z>0 
\end{equation}
For simplicity we rewrite inequality in \eqref{Eq:MathcalZ} as
\begin{equation}\label{Eq:MathcalZ2}
  \begin{aligned}
 \mathcal{Z}_2 =& \left[ x(x-x_m) \log(1+x_m) \right] \\ 
	       -& \left[ x_m (1+x) \log(1+x) \left( \log(1+x) - \log(1+x_m) \right) \right] \\
		& > 0
  \end{aligned}
\end{equation}
As a result of \eqref{Eq:MathcalZ_Limit} we have $ \lim_{x \rightarrow x_m^{+}} \mathcal{Z}_2 \rightarrow 0^+ $ for all $0<x_m<x$. Since \eqref{Eq:MathcalZ2} holds for $x\rightarrow x_m^+ $, if $\mathcal{Z}_2$ was strictly increasing function of $x$ for all $x>x_m$, \eqref{Eq:MathcalZ2} and \eqref{Eq:MathcalZ} would also hold as a corollary.

To show that $\mathcal{Z}_2$ is strictly increasing, we first verify that 
\begin{equation}\label{Eq:MathcalZ2_Derivative}
 \frac{\partial \mathcal{Z}_2}{\partial x} \bigg( x = x_m \bigg) = 0
\end{equation}
given that \eqref{Eq:MathcalZ2_Derivative} is true, an alternative way to proceed is to show that $\frac{\partial \mathcal{Z}_2}{\partial x}$ is itself strictly increasing function of $x$ (to show that the second partial derivate is strictly positive). Define $\mathcal{Z}_3$
\begin{equation}\label{Eq:MathcalZ3}
  \begin{aligned}
      \mathcal{Z}_3 =& \frac{\partial^2 \mathcal{Z}_2}{\partial x^2} \times (1+x)\\
		    =& 2\log(1+x_m) - 2x_m + 2x \log(1+x_m)\\
		     & - 2x_m \log(1+x) + x_m \log(1+x_m)\\
  \end{aligned}
\end{equation}
It can be verified that for all $x>x_m$ and $x_m > 0 $ we have  $ \lim_{x \rightarrow x_m^{+}} \mathcal{Z}_3 > 0 $. Taking the partial derivate of $\mathcal{Z}_3$ with respect to $x$ we have
\begin{equation}
      \frac{\partial \mathcal{Z}_3}{\partial x} = 2 \left[ \log(1+x_m) - \frac{x_m}{1+x} \right]
\end{equation}
but from \eqref{Eq:Log_Inequality} we have
\begin{equation}
  \begin{aligned}
      &\log(1+x_m) > \frac{x_m}{1+x_m}>\frac{x_m}{1+x} \quad \text{for all} \quad x>x_m\\
      & \Rightarrow \quad 2 \left[ \log(1+x_m) - \frac{x_m}{1+x} \right] >0 \quad \forall x>x_m>0\\
  \end{aligned}
\end{equation}
and hence we conclude that
\begin{equation}
 \frac{\partial \mathcal{Z}_3}{\partial x} > 0 \qquad \text{for all} \quad x>x_m>0
\end{equation}
and $\mathcal{Z}_2$ is indeed an increasing function of $x$ for all $0<x_m<x$. Taking the reverse steps that resulted in \eqref{Eq:MathcalZ2} and \eqref{Eq:MathcalZ} we conclude that $Z_i$ in \eqref{Eq:Zi} is indeed a decreasing function and hence $J_{TH,U}$ given in \eqref{Eq:JTHU} is an upper bound for $J_{TH}$ in \eqref{Eq:Power_Threshold}.
\end{proof}

\section{Optimal Mixed-Strategies}\label{App:II}

Assume $0 \geq J_{Ave} < J_{TH} $ is such that jammer's optimal mixed strategy is to use $(m+1)$ of his pure strategies. It is easy to show that in such a case, the transmitter's optimal mixed strategy includes, at most, $(m+1)$ of his pure strategies. For now, we assume the transmitter is using $(m)$ of his pure strategies, i.e.;
\begin{equation}\label{Eq:Mixed_Strategy_Vectors}
 \begin{aligned}
    \boldsymbol{x}^T &= \begin{bmatrix}
                       x_0 & x_1 & \dots & x_m & 0 & \dots & 0
                      \end{bmatrix}_{1\times (1+N_T)}\\
    \boldsymbol{y}^T &= \begin{bmatrix}
                       y_0 & y_1 & \dots & y_m & y_{m+1} & 0 & \dots & 0
                      \end{bmatrix}_{1\times (1+N_T)}\\
		     & \qquad  \text{where} \quad 0 \leq m < N_T\\
 \end{aligned}
\end{equation}
The expected payoff of the game for the mixed-strategy pair $(\boldsymbol{x},\boldsymbol{y}) $ given in \eqref{Eq:Mixed_Strategy_Vectors} is
\begin{equation*}
  \begin{aligned}
  C(\boldsymbol{x},\boldsymbol{y}) &= \boldsymbol{x}^T C \boldsymbol{y}\\
				  &= \begin{bmatrix}
				      x_0 & x_1 & \dots & x_m & 0 & \dots & 0
				     \end{bmatrix}\\
				  & \times  \begin{bmatrix}
					      R_0     & 0       & 0     & \dots   & 0      \\
					      \vdots  & \ddots  & 0     & 0       & \vdots \\
					      R_i     & \dots   & R_i   & 0       & \vdots \\
					      \vdots  &         &       & \ddots  & \vdots \\
					      R_{N_T} & R_{N_T} & \dots & R_{N_T} & R_{N_T}\\
					    \end{bmatrix}
				      \begin{bmatrix}
					y_0 \\ y_1 \\ \dots \\ y_m \\ y_{m+1} \\ 0 \\ \dots \\ 0
				      \end{bmatrix}\\
  \end{aligned}
\end{equation*}
\begin{equation*}
 \begin{aligned}
		& = \Bigg[ \sum^{m}_{i=0} x_iR_i \quad \sum^{m}_{i=1} x_iR_i \quad \dots \\
		& \quad \sum^{m}_{i=j} x_iR_i \quad \dots \quad \sum^{m}_{i=m} x_iR_i \quad 0 \quad \dots \Bigg]
				      \begin{bmatrix}
					y_0 \\ y_1 \\ \dots \\ y_m \\ y_{m+1} \\ 0 \\ \dots \\ 0
				      \end{bmatrix}\\  
 \end{aligned}
\end{equation*}
\begin{equation}\label{Eq:General_Expected_Payoff}
 \begin{aligned}
	=& y_0 \sum^{m}_{i=0} x_iR_i + y_1 \sum^{m}_{i=1} x_iR_i + \dots \\ 
	 & \qquad + y_j \sum^{m}_{i=j} x_iR_i + \dots + y_mx_mR_m\\  
 \end{aligned}
\end{equation}
We can rewrite \eqref{Eq:General_Expected_Payoff} in terms of $R_i$'s;
\begin{equation}
  \begin{aligned}
  C(\boldsymbol{x},\boldsymbol{y}) = & x_0 R_0 (y_0) + x_1 R_1 (y_0 + y_1) + \dots \\
				    &  \quad + x_i R_i \sum^{i}_{j=0} y_j + \dots x_m R_m \sum^{m}_{j=0} y_j
  \end{aligned}
\end{equation}
Assume the transmitter is using the mixed-strategy $\widehat{\boldsymbol{x}}$, defined below, which not necessary an optimal mixed-strategy.
\begin{equation}\label{Eq:x_Hat_Vector}
 \widehat{\boldsymbol{x}} = \alpha^{-1}_m \begin{bmatrix} R_0^{-1}&R_1^{-1}&\dots&R_m^{-1}&0&\dots&0 \end{bmatrix} _{1\times (N_T + 1)}
\end{equation}
where $\alpha_m$ is defined as
\begin{equation}\label{Eq:Alpha_m}
 \alpha_m^{-1} = \sum^{m}_{i=0} R_i^{-1} \qquad 0<m<N_T
\end{equation}
The expected payoff of the game for the mixed-strategy pair $(\widehat{\boldsymbol{x}},\boldsymbol{y})$ is
\begin{equation}\label{Eq:Expected_Payoff_for_x_Hat}
 \begin{aligned}
    C \big(\widehat{\boldsymbol{x}},\boldsymbol{y} \big) =& \widehat{\boldsymbol{x}}^T C \boldsymbol{y}\\
					       =& \Bigg[ (y_0) + (y_0 + y_1) + \dots\\
						& \quad + \left( \sum^{i}_{j=0} y_j \right) + \dots + \left( \sum^{i}_{j=m} y_j \right) \Bigg] \alpha_m\\
					       =& \Bigg[ \left( 1 - \sum^{m+1}_{j=1} y_j \right) + \left( \sum^{m+1}_{j=2} y_j \right) + \dots \\
						& \quad + \left( 1- \sum^{m+1}_{j=i+1} y_j \right) + \dots + \left( 1 - y_m \right) \Bigg] \alpha_m\\
 \end{aligned}
\end{equation}
By expanding the sums in \eqref{Eq:Expected_Payoff_for_x_Hat} we can rewrite the expected payoff of the game for $(\widehat{\boldsymbol{x}},\boldsymbol{y})$ in a more compact form
\begin{equation}\label{Eq:Expected_Payoff_for_x_Hat_Compact}
  \begin{aligned}
    &C \big(\widehat{\boldsymbol{x}},\boldsymbol{y} \big) = \alpha_m \times\\
    &
  \begin{pmatrix}
    1 -    & (      & y_1 &  + y_2 & + & \dots  & + y_{i+1} & \dots & + y_{m+1} )+\\
    1 -    & (      &     &\ \ y_2 & + & \dots  & + y_{i+1} & \dots & + y_{m+1} )+\\
    \vdots & \vdots &     &        &   & \ddots &   \vdots  &       &     \vdots  \\
    1 -    & (      &     &        &   &        & + y_{i+1} & \dots & + y_{m+1} )+\\
    \vdots & \vdots &     &        &   &        &           & \ddots&     \vdots  \\
    1 -    & (      &     &        &   &        &           & \dots & + y_{m+1} )\ \ \\
  \end{pmatrix} \\
    &= \left[ \big( m+1 \big) - \left( \sum^{m+1}_{j=0} jy_j \right)  \right] \alpha_m
  \end{aligned}
\end{equation}
but from \eqref{Eq:Jammer_Mixed_Set_1} and for all $\boldsymbol{y} \in  \mathbb{Y}_{\text{E}} $ we have
\begin{equation}\label{Eq:Average_Power_Constraint}
 \begin{aligned}
    \boldsymbol{J}^T \boldsymbol{y} = J_{Ave} &= \sum^{m+1}_{j=0} \left( \frac{j}{N_T} J_{Max} \right) y_j \\
					      &= \left( \sum^{m+1}_{j=0} j y_j \right) \frac{J_{Max}}{N_T}\\
			\Rightarrow \qquad    & \sum^{m+1}_{j=0} j y_j = N_T \frac{J_{Ave}}{J_{Max}}
 \end{aligned}
\end{equation}
substituting \eqref{Eq:Average_Power_Constraint} in \eqref{Eq:Expected_Payoff_for_x_Hat_Compact} and the expected payoff of the game for the the mixed-strategy pair $(\widehat{\boldsymbol{x}}, \boldsymbol{y})$ becomes
\begin{equation}\label{Eq:Expected_Payoff_for_x_Hat_Final}
 C \big(\widehat{\boldsymbol{x}}, \boldsymbol{y}\big) = \left( m+1 - N_T \frac{J_{Ave}}{J_{Max}} \right) \alpha_m
\end{equation}
Hence, by using $\boldsymbol{\widehat{x}}$ (which is not necessary an optimal mixed strategy) against jammer's arbitrary mixed-strategy with average power $J_{Ave}$, the transmitter can achieve the expected payoff given in \eqref{Eq:Expected_Payoff_for_x_Hat_Final}. Therefore, the expected payoff of the game at equilibrium must at least be equal to \eqref{Eq:Expected_Payoff_for_x_Hat_Compact}, i.e.,
\begin{equation}\label{Eq:Payoff_for_Trans_Optimal}
  C \big(\boldsymbol{x}^*, \boldsymbol{y}^*  \big) \geq C \big(\widehat{\boldsymbol{x}}, \boldsymbol{y}^* \big) = \left( m+1 - N_T \frac{J_{Ave}}{J_{Max}} \right) \alpha_m
\end{equation}
In the same way, it can be shown that if the transmitter and the jammer were using the same number of pure strategies, $(m+1)$, the mixed-strategy given in \eqref{Eq:x_Hat_Vector} results in the same expected payoff give that $m$ is replaced by $(m+1)$.  


Now, assume $J_{Ave}$ is such that the jammer is using $ (m+1)$ of his pure strategies. Define the the following mixed-strategy for the jammer which is not necessary an optimal mixed-strategy.
\begin{multline}\label{Eq:y_Hat}
 \widehat{\boldsymbol{y}}^T = \begin{bmatrix}
                       y_0 & y_1 & \dots & y_m & y_{m+1} & 0 & \dots & 0
                      \end{bmatrix}_{1\times (1+N_T)}\\
  y_j =  \begin{cases}
                 R_0^{-1}R_{m+1}&  j=0\\
		 \Big( R_j^{-1} - R_{j-1}^{-1} \Big)R_{m+1} &  j=1,\dots, m+1\\
		 0 &  m+1<j\leq N_T\\
                \end{cases} \\
\end{multline}
it can be verified that $\widehat{\boldsymbol{y}}$ is indeed a mixed-strategy vector;
\begin{equation}
 \begin{aligned}
    \sum^{m+1}_{j=0}y_j & =  R_{m+1}R_0^{-1} + R_{m+1} \sum^{m+1}_{j=1} \Big( R_j^{-1} - R_{j-1}^{-1} \Big)\\
			& =  R_{m+1} \left( R_0^{-1} + \sum^{m+1}_{j=1} R_j^{-1} - \sum^{m+1}_{j=1} R_{j-1}^{-1} \right) \\ 
			& =  R_{m+1} \left( R_{0}^{-1} + R_{m+1}^{-1} - R_{0}^{-1} \right) = 1\\
 \end{aligned}
\end{equation}
Furthermore, we have
\begin{equation}
 \begin{aligned}
    \sum^{m+1}_{j=0} jy_j &= \sum^{m+1}_{j=1} jy_j = R_{m+1} \left( \sum^{m+1}_{j=1} j \Big( R_j^{-1} - R_{j-1}^{-1} \Big) \right)\\
			  & = R_{m+1} \left( \sum^{m+1}_{j=1} jR_j^{-1} - \sum^{m+1}_{j=1} \big(j-1+1\big)R_{j-1}^{-1} \right) \\
			  &= R_{m+1} \Bigg( \sum^{m+1}_{j=1}jR_j^{-1} - \sum^{m+1}_{j=1} (j-1)R_{j-1}^{-1}  \\
			  & \qquad \qquad - \sum^{m}_{j=0} R_{j}^{-1} \Bigg)\\
			  &= R_{m+1} \left( \big(m+1\big) R_{M=1}^{-1} - \sum^{m}_{j=0} R_j^{-1} \right)\\
			  &= \big(m+1\big) - \alpha_m^{-1} R_{m+1}\\
 \end{aligned}
\end{equation}
hence the average power of $\widehat{\boldsymbol{y}}$ becomes;
\begin{equation}\label{Eq:y_Hat_Average_Power}
 \boldsymbol{J}^T \widehat{\boldsymbol{y}} = J_{Ave,m} = \Big( m+1 - \alpha_m^{-1} R_{m+1} \Big) \frac{J_{Max}}{N_T}
\end{equation}
Assuming the jammer is using $\widehat{\boldsymbol{y}}$ against transmitter's arbitrary mixed-strategy, the expected payoff of the game is
\begin{equation}
  \begin{aligned}
 & C \big( \boldsymbol{x}, \widehat{\boldsymbol{y}} \big) =  \boldsymbol{x}^T C \widehat{\boldsymbol{y}}
							= R_{m+1} \boldsymbol{x}^T \times \\
					    &\begin{bmatrix}
					      R_0     & 0       & 0     & \dots   & 0      \\
					      \vdots  & \ddots  & 0     & 0       & \vdots \\
					      R_i     & \dots   & R_i   & 0       & \vdots \\
					      \vdots  &         &       & \ddots  & \vdots \\
					      R_{N_T} & R_{N_T} & \dots & R_{N_T} & R_{N_T}\\
					    \end{bmatrix}
					    \begin{bmatrix}
					      R_0^{-1}\\
					      R_1^{-1} - R_2^{-1}\\
					      \vdots\\
					      R_{j}^{-1} - R_{j-1}^{-1}\\
					      \vdots\\
					      R_{m+1}^{-1} - R_m^{-1}\\
					      0\\
					      \vdots\\
					      0
					    \end{bmatrix}\\
  &\qquad = R_{m+1} \boldsymbol{x}^T \begin{bmatrix}
                                      \boldsymbol{1}_{(m+2)\times 1}\\
				      0\\
				      \vdots\\
				      0
                                     \end{bmatrix} = R_{m+1}
  \end{aligned}
\end{equation}
since $ \boldsymbol{x}$ is a mixed-strategy and has at most $(m+1)$ non-zero elements (see \eqref{Eq:Mixed_Strategy_Vectors}).

Therefore, by using $\widehat{\boldsymbol{y}}$ given in \eqref{Eq:y_Hat} against transmitter's arbitrary mixed-strategy, the jammer guarantees not to lose more than  $R_{m+1}$ given that his average jamming power is $J_{Ave,m}$ given in \eqref{Eq:y_Hat_Average_Power}. Since $\widehat{\boldsymbol{y}}$ is not necessary an optimal mixed-strategy for the jammer, the optimal mixed-strategy would at most be less than $R_{m=1}$, i.e.,
\begin{equation}\label{Eq:Game_Upper_Bound}
 C \big( \boldsymbol{x}^*, \boldsymbol{y}^* \big) \leq C \big( \boldsymbol{x}^*, \widehat{\boldsymbol{y}} \big) = R_{m+1} 
\end{equation}
It can be shown (by induction) that for specific values of average jamming power given by \eqref{Eq:y_Hat_Average_Power} and for $ m = 0,\dots,N_T-1 $, optimal mixed-strategy for the transmitter is to use $(m+1)$ of his pure strategies. From \eqref{Eq:Payoff_for_Trans_Optimal} and by letting $m \rightarrow (m+1) $ we have 
\begin{equation}\label{Eq:Game_Lower_Bound}
 \begin{aligned}
    C\big(\boldsymbol{x}^*,\boldsymbol{y}^*\big) \geq & \left(m+2 - N_T \frac{J_{Ave,m}}{J_{Max}} \right) \alpha_{m+1}\\
						      & \quad = \Big(1 + \alpha_m^{-1} R_{m+1} \Big) \alpha_{m+1}\\
    \alpha_{m+1}^{-1} = \alpha_m^{-1} + R_{m+1}^{-1} \Rightarrow & \quad = \Big( 1 + \alpha_{m+1}^{-1} R_{m+1} -1 \Big) \alpha_{m+1}\\
						      & \quad = R_{m+1}\\
 \end{aligned}
\end{equation}
Therefore from \eqref{Eq:Game_Upper_Bound} and \eqref{Eq:Game_Lower_Bound} we have
\begin{multline}
 R_{m+1} = C\big(\widehat{\boldsymbol{x}},\boldsymbol{y}^*\big) \leq C\big(\boldsymbol{x}^*,\boldsymbol{y}^* \big) \leq C\big(\boldsymbol{x}^*,\widehat{\boldsymbol{y}}\big) = R_{m+1}\\
  \Rightarrow C\big(\boldsymbol{x}^*,\boldsymbol{y}^* \big) = R_{m+1} \quad \text{for} \quad J_{Ave} = J_{Ave,m}
\end{multline}
and hence $\widehat{\boldsymbol{x}}$ and $\widehat{\boldsymbol{y}}$ defined in \eqref{Eq:x_Hat_Vector} and \eqref{Eq:y_Hat} are indeed optimal mixed-strategies for the transmitter and the jammer respectively\footnote{Not necessary unique though.}.


\end{document}